\newcommand\topstrut[1][1.2ex]{\setlength\bigstrutjot{#1}{\bigstrut[t]}} 
\newcommand\botstrut[1][0.9ex]{\setlength\bigstrutjot{#1}{\bigstrut[b]}} 
\newcounter{nameOfYourChoice} 
\newtheorem{theorem}{Theorem} 
\newtheorem{lemma}[theorem]{Lemma} 
\newtheorem{proposition}[theorem]{Proposition} 
\theoremstyle{remark}
\newtheorem{remark}{Remark} 
\newcounter{example} 
\newenvironment{example}[1][]{\refstepcounter{example}\par
	\noindent \textbf{Example~\theexample. #1}}{\bigskip}
\title{\textbf{Reaction Network Analysis \\ of Metabolic Insulin Signaling}}
\author[1]{\textbf{Patrick Vincent N. Lubenia}}
\author[1,2,3]{\textbf{Eduardo R. Mendoza}}
\author[1,2,4\thanks{Corresponding author. \\ \indent \hspace{0.01 cm} Emails: angelyn.lao@dlsu.edu.ph, pnlubenia@upd.edu.ph, eduardo.mendoza@dlsu.edu.ph}]{\textbf{Angelyn R. Lao}}
\affil[1]{Systems and Computational Biology Research Unit, Center for Natural Sciences and Environmental Research, 2401 Taft Avenue, Manila, 0922, Metro Manila, Philippines}
\affil[2]{Department of Mathematics and Statistics, De La Salle University, 2401 Taft Avenue, Manila, 0922, Metro Manila, Philippines}
\affil[3]{Max Planck Institute of Biochemistry, Am Klopferspitz 18, 82152, Martinsried near Munich, Germany}
\affil[4]{Center for Complexity and Emerging Technologies, 2401 Taft Avenue, Manila, 0922, Metro Manila, Philippines}
\date{}
\begin{document}

\maketitle

\begin{abstract}
	Absolute concentration robustness (ACR) and concordance are novel concepts in the theory of robustness and stability within Chemical Reaction Network Theory. In this paper, we have extended Shinar and Feinberg's reaction network analysis approach to the insulin signaling system based on recent advances in decomposing reaction networks. We have shown that the network with 20 species, 35 complexes, and 35 reactions is concordant, implying at most one positive equilibrium in each of its stoichiometric compatibility class. We have obtained the system's finest independent decomposition consisting of 10 subnetworks, a coarsening of which reveals three subnetworks which are not only functionally but also structurally important. Utilizing the network's deficiency-oriented coarsening, we have developed a method to determine positive equilibria for the entire network. Our analysis has also shown that the system has ACR in 8 species all coming from a deficiency zero subnetwork. Interestingly, we have shown that, for a set of rate constants, the insulin-regulated glucose transporter GLUT4 (important in glucose energy metabolism), has stable ACR.

\bigskip

\noindent
\textbf{Keywords:} independent decomposition, metabolic insulin signaling, positive equilibria, reaction network, subnetwork
\end{abstract}

\section{Introduction}

Between 2010 and 2015, Shinar and Feinberg published a series of papers based on the novel concepts of absolute concentration robustness (ACR) and concordance, which one may view as the beginnings of a theory of robustness within Chemical Reaction Network Theory (CRNT) \cite{KSHF2015, SHFE2010, SHFE2011, SHFE2012, SHFE2013}. ACR characterizes the invariance of the concentrations of a species at all positive equilibria of a kinetic system and, from experimental observations in \textit{Escherichia coli} subsystems, the authors extracted sufficient (mathematical) conditions for the property. They related ACR to bifunctionality of enzymes and viewed the condition as a ``structural source'' \cite{SHFE2010} or ``design principle'' \cite{SHFE2011} of robustness. Concordance, on the other hand, is seen to indicate ``architectures that, by their very nature, enforce duller, more restrictive behavior despite what might be great intricacy in the interplay of many species, even independent of values that kinetic parameters might take'' \cite{SHFE2012}. Shinar and Feinberg provided small models of biological systems such as the osmotic pressure response system EnvZ-OmpR of \textit{Escherichia coli} for ACR and calcium dynamics of olfactory cilia for concordance. The goal of this paper is to explore the extension of their reaction network analysis approach to larger models of biochemical systems based on recent advances in decomposing reaction networks \cite{FOMF2021, FOME2021, HEDC2021} using the example of a widely used model of the insulin signaling system \cite{SEDA2002}.

The insulin signaling system is an important metabolic system that, upon binding of insulin to its receptor at the cell surface, initiates the uptake of glucose into the cell. The analysis of this process in muscle cells, hepatocytes, cells of adipose tissue, and (most recently) neurons is crucial for understanding the underlying mechanisms of insulin resistance. This reduced ability of cells to use available insulin for energy metabolism is viewed as a common factor in diseases such as obesity, type 2 diabetes, metabolic syndrome, and cancer. More recently, brain insulin resistance has received increased attention from neuroscientists in connection with mild cognitive impairment and Alzheimer's Disease (AD) \cite{ARNO2018, CRAF2017}. Our studies of the metabolic aspects of AD, in fact, provided the motivation for this analysis of a mathematical model of the insulin signaling system \cite{VTML2022}.

The complexity of the insulin signaling system, both in terms of the number of molecular components involved as well as the intricate combination of positive and negative feedback loops, clearly warrants the application of mathematical modeling and computational tools. A natural starting point for our studies of such models is the seminal work of Sedaghat et al \cite{SEDA2002}. This widely-cited work has been applied in various contexts and the authors conveniently provided WinPP source files for the system. We utilized the Hars-T\'{o}th criterion presented in \cite{CBHB2009} to derive a chemical reaction network underlying the system of ordinary differential equations (ODEs) of the Sedaghat et al model, leading to its realization as a mass action system with 20 species, 35 complexes, and 35 reactions.

The first main result of our analysis of the Sedaghat et al system is that the underlying network is concordant. Concordance is a property abstracted from the continuous flow stirred tank reactor model widely used in chemical engineering and its occurrence in complex biological systems is not straightforward. To date, only two smaller systems---the previously-mentioned calcium signaling and the Wnt signaling \cite{FEIN2019}---have been shown to have the property. Concordance has numerous structural and kinetic consequences including monostationarity for the insulin signaling system (see Section \ref{sec:basic}). A detailed discussion of concordance properties can be found in Chapter 10 of Feinberg's recent book \cite{FEIN2019}.

The remaining main results derive from the systematic use of decomposition theory.

Decomposition theory was initiated by Feinberg in his 1987 review \cite{FEIN1987} where he also introduced the important concept of an independent decomposition, a decomposition wherein the stoichiometric subspace of the whole network is the direct sum of the stoichiometric subspaces of the subnetworks. He highlighted its importance by showing that in an independent decomposition the set of positive equilibria of the whole network is the intersection of the equilibria sets of the subnetworks. Hence, if a species has ACR in a subnetwork of an independent decomposition, it also has ACR in the whole network since the latter's equilibria set is contained in that of the subnetwork. Recent results of Hernandez and De la Cruz \cite{HEDC2021} provided a criterion and procedure for determining the existence of a nontrivial independent decomposition (the trivial independent decomposition is the network itself).

The second main result is the existence of nontrivial independent decompositions of the network. This property allows insights into structural relationships between positive equilibria of the whole network and its subnetworks. Since the algorithm from Hernandez and De la Cruz \cite{HEDC2021} determines the finest independent decomposition and independence is invariant under decomposition coarsening, all independent decompositions can be generated from the result. We developed a Matlab code for the said algorithm and applied it to the insulin signaling system.

Third, it is shown that the networks of three functional modules---the insulin receptor binding and recycling subsystem, the postreceptor signaling subsystem, and the GLUT4 translocation subsystem---discussed by Sedaghat et al also form an independent decomposition. These subsystems, hence, not only are functionally but also structurally significant, i.e., positive equilibria of the whole system come from equilibria of these systems. Further details can be found in Section \ref{subsec:finest}.

Fourth is the existence of a large weakly reversible, deficiency zero subnetwork constituting a well-understood part of the system and which is also the source of all 8 ACR species.

For the ACR analysis, we implemented the algorithm of Fontanil et al \cite{FOMF2021} in Matlab and discovered Shinar-Feinberg reaction pairs in appropriate low-deficiency subnetworks of coarsenings of the finest independent decomposition. We found that the system has ACR in 8 (out of 20) species. This restricts the variability of the positive equilibria and suggests that this ``structural source of robustness'' may be an important factor in the system's overall robustness, a property which, according to a previous study by Dexter et al \cite{DEXD2015}, is not common in biochemical systems. Overall, however, there is still a high variation in equilibria composition due to the lack of ACR species in the deficiency 7 subnetwork of the network's deficiency-oriented decomposition. To our knowledge, this is the first large kinetic system for which an ACR assessment has been documented.

The last main result is the discovery of ACR of the essential glucose transporter GLUT4 which, coupled with adequate glucose supply, enables reliable cellular energy production. Further details are discussed in Section \ref{sec:acrDef}.

The paper is organized as follows: The construction of the metabolic insulin signaling reaction network, as well as its basic properties are described in Section \ref{sec:model}. Section \ref{sec:indepDecomp} discusses the finest nontrivial independent decomposition of the insulin signaling system and its deficiency-oriented coarsening. In Section \ref{sec:acr}, a brief review of relevant ACR results is followed by ACR analyses based on decompositions of the network. A short summary and the future direction of our research conclude the paper in Section \ref{sec:conclusion}. Basic concepts and notations related to CRNT, as well as some detailed computations relevant to the paper, are provided in the Appendix.

\section{Model Realization as a Mass Action System}
\label{sec:model}

In this section, we present the ODE system of Sedaghat et al \cite{SEDA2002} and its mass action system realization. We then analyze various properties of the underlying chemical reaction network (CRN), including its positive dependency, $t$-minimality, nonconservativity, and concordance.

\subsection{Insulin Signaling Model}
\label{subsec:ode}

We consider the system of ODEs in Sedaghat et al \cite{SEDA2002} which has only the state variables $X_2$ to $X_{21}$. We streamlined the notation to clearly identify the 35 reactions involved in the  metabolic insulin signaling network (see Appendix \ref{app:var} for the description and units of the variables):
\begin{align*}
	& \dot{X}_2 = k_2 X_3 + k_6 X_5 - k_1 X_2 + k_8 X_6 - k_7 X_2 \\
	& \dot{X}_3 = k_1 X_2 - k_2 X_3 - k_5 X_3 \\
	& \dot{X}_4 = k_3 X_5 - k_4 X_4 + k_{10} X_7 - k_9 X_4 \\
	& \dot{X}_5 = k_5 X_3 + k_4 X_4 - k_3 X_5 - k_6 X_5 + k_{12} X_8 - k_{11} X_5 \\
	& \dot{X}_6 = k_{13} - k_{14} X_6 + k_{15} X_7 + k_{16} X_8 + k_7 X_2 - k_8 X_6 \\
	& \dot{X}_7 = k_9 X_4 - k_{10} X_7 - k_{15} X_7 \\
	& \dot{X}_8 = k_{11} X_5 - k_{12} X_8 - k_{16} X_8 \\
	& \dot{X}_9 = k_{19} X_{10} - k_{17} X_9 X_4 - k_{18} X_9 X_5 \\
	& \dot{X}_{10} = k_{17} X_9 X_4 + k_{18} X_9 X_5 + k_{21} X_{12} - k_{19} X_{10} - k_{20} X_{11} X_{10} \\
	& \dot{X}_{11} = k_{21} X_{12} - k_{20} X_{10} X_{11} \\
	& \dot{X}_{12} = k_{20} X_{10} X_{11} - k_{21} X_{12} \\
	& \dot{X}_{13} = k_{22} X_{12} X_{14} + k_{24} X_{15} - k_{23} X_{13} - k_{25} X_{13} \\
	& \dot{X}_{14} = k_{23} X_{13} - k_{22} X_{12} X_{14} \\
	& \dot{X}_{15} = k_{25} X_{13} - k_{24} X_{15} \\
	& \dot{X}_{16} = k_{27} X_{17} - k_{26} X_{13} X_{16} \\
	& \dot{X}_{17} = k_{26} X_{13} X_{16} - k_{27} X_{17} \\
	& \dot{X}_{18} = k_{29} X_{19} - k_{28} X_{13} X_{18} \\
	& \dot{X}_{19} = k_{28} X_{13} X_{18} - k_{29} X_{19} \\
	& \dot{X}_{20} = k_{31} X_{21} - k_{30} X_{20} - k_{32} X_{17} X_{20} - k_{33} X_{19} X_{20} + k_{34} - k_{35} X_{20} \\
	& \dot{X}_{21} = k_{30} X_{20} + k_{32} X_{17} X_{20} + k_{33} X_{19} X_{20}  - k_{31} X_{21}
\end{align*}

Sedaghat et al used a biochemical map to derive the ODE system and an incomplete extract of the map is shown in their paper. For better visual orientation of the reader, we combined information from the paper's text and the ODE system to reconstruct the complete biochemical map shown in Figure \ref{fig1}.

\begin{figure}[ht]
    \centering
    \includegraphics[width = 0.75\textwidth]{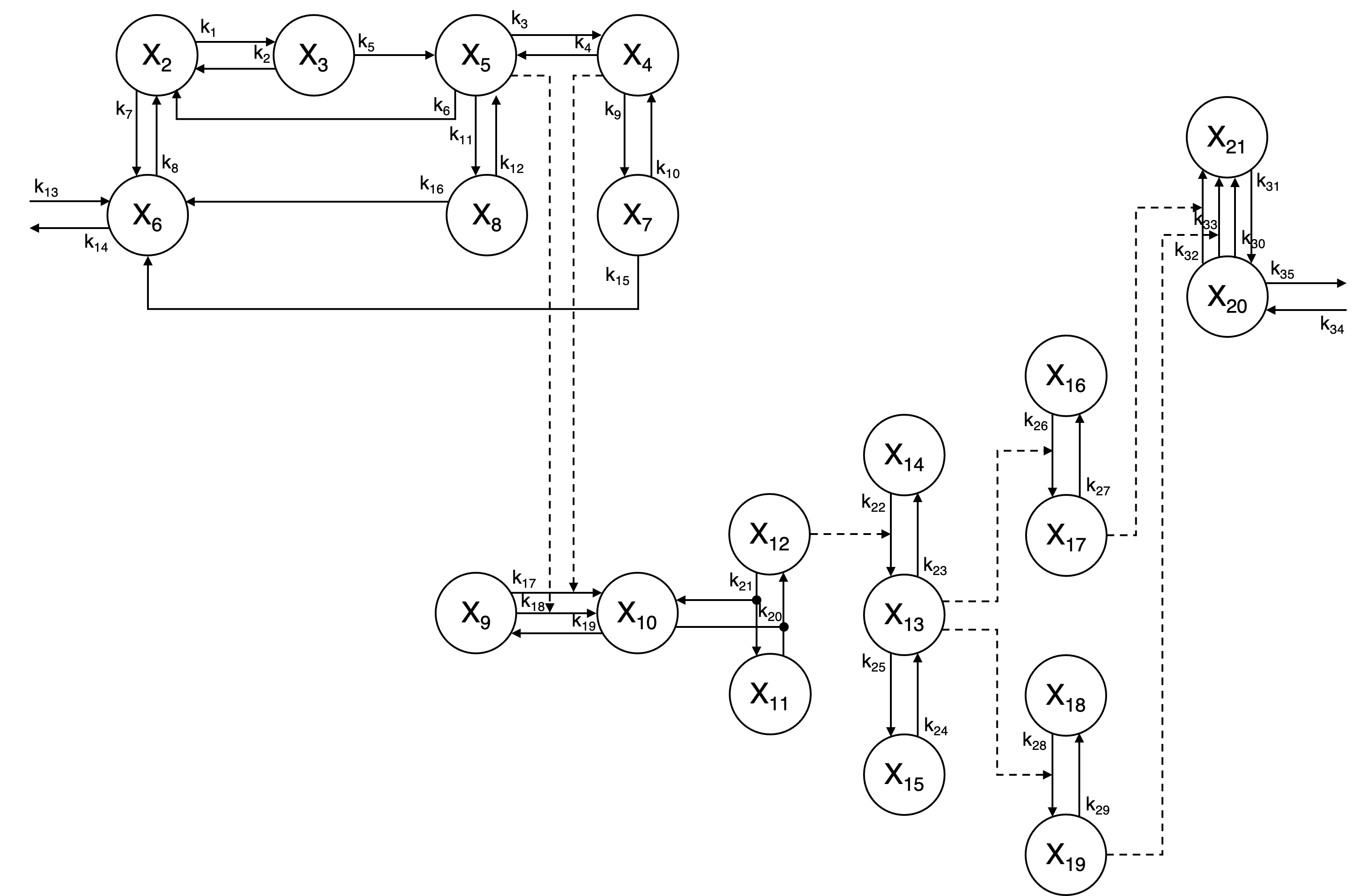}
    \caption{Biochemical map of the insulin signaling system where $X_2, \ldots, X_{21}$ are the species of the network (see Appendix \ref{app:var} for the description and units of the variables); $k_1, \ldots, k_{35}$ are the rate constants of the reactions; and solid lines represent mass transfer reactions while broken lines represent regulatory reactions}
    \label{fig1}
\end{figure}

\subsection{Application of the Hars-T\'{o}th Criterion for Mass Action System Realization}

Theorem 4 of Chellaboina et al \cite{CBHB2009}, called the Hars-T\'{o}th criterion, guarantees a mass action system realization such that the ODE system with species $X_1, \ldots, X_m$ and $r$ reactions is given by $\dot{X} = f(X) = (B - A)^\top (k \circ X^A)$ where $\circ$ represents componentwise multiplication, $A = [a_{ij}]$, $B = [b_{ij}]$, $k = [k_1, \ldots, k_r]^\top$, $X = [X_1, \ldots, X_m]^\top$, and $X^A$ is the element of $\mathbb{R}^m$ with $i$th component $X_1^{a_{i1}} \cdots X_m^{a_{im}}$ for $i = 1, \ldots, r$ and $j = 1, \ldots, m$.

In the insulin signaling model $\dot{X} = f(X)$ in Section \ref{subsec:ode}, it is easy to check that the function $f_j (X_2, \ldots, X_{j-1}, 0, X_{j+1}, \ldots, X_{21})$ is a multivariate polynomial with nonnegative coefficients for each $j = 2, \ldots, 21$. For instance, for $j = 2$ and $j = 3$, we have
\begin{align*}
	& f_2 (0, X_3, X_4 \ldots, X_{21}) = k_2 X_3 + k_6 X_5 + k_8 X_6 \\
	& f_3 (X_2, 0, X_4, \ldots, X_{21}) = k_1 X_2.
\end{align*}
Similar computations for $j = 4, \ldots, 21$ yield the same conclusion. Therefore, by the Hars-T\'{o}th criterion, there is a reaction network of the form $AX \xrightarrow{k} BX$ such that $f(X) = (B - A)^\top (k \circ X^A)$ where $A$ and $B$ have nonnegative integer entries (see Appendix \ref{app:crn} for the detailed computation). The CRN corresponding to the insulin signaling model is:
\begin{multicols}{2}
\noindent
\begin{align*}
	& R_1: X_2 \rightarrow X_3 \\
	& R_2: X_3 \rightarrow X_2 \\
	& R_3: X_5 \rightarrow X_4 \\
	& R_4: X_4 \rightarrow X_5 \\
	& R_5: X_3 \rightarrow X_5 \\
	& R_6: X_5 \rightarrow X_2 \\
	& R_7: X_2 \rightarrow X_6 \\
	& R_8: X_6 \rightarrow X_2 \\
	& R_9: X_4 \rightarrow X_7 \\
	& R_{10}: X_7 \rightarrow X_4 \\
	& R_{11}: X_5 \rightarrow X_8 \\
	& R_{12}: X_8 \rightarrow X_5 \\
	& R_{13}: 0 \rightarrow X_6 \\
	& R_{14}: X_6 \rightarrow 0 \\
	& R_{15}: X_7 \rightarrow X_6 \\
	& R_{16}: X_8 \rightarrow X_6 \\
	& R_{17}: X_9 + X_4 \rightarrow X_{10} + X_4 \\
	& R_{18}: X_9 + X_5 \rightarrow X_{10} + X_5 \\
	& R_{19}: X_{10} \rightarrow X_9 \\
	& R_{20}: X_{10} + X_{11} \rightarrow X_{12} \\
	& R_{21}: X_{12} \rightarrow X_{10} + X_{11} \\
	& R_{22}: X_{14} + X_{12} \rightarrow X_{13} + X_{12} \\
	& R_{23}: X_{13} \rightarrow X_{14} \\
	& R_{24}: X_{15} \rightarrow X_{13} \\
	& R_{25}: X_{13} \rightarrow X_{15} \\
	& R_{26}: X_{16} + X_{13} \rightarrow X_{17} + X_{13} \\
	& R_{27}: X_{17} \rightarrow X_{16} \\
	& R_{28}: X_{18} + X_{13} \rightarrow X_{19} + X_{13} \\
	& R_{29}: X_{19} \rightarrow X_{18} \\
	& R_{30}: X_{20} \rightarrow X_{21} \\
	& R_{31}: X_{21} \rightarrow X_{20} \\
	& R_{32}: X_{20} + X_{17} \rightarrow X_{21} + X_{17} \\
	& R_{33}: X_{20} + X_{19} \rightarrow X_{21} + X_{19} \\
	& R_{34}: 0 \rightarrow X_{20} \\
	& R_{35}: X_{20} \rightarrow 0.
\end{align*}
\end{multicols}

\subsection{CRNT Analysis of the Mass Action System: Key Basic Properties}
\label{sec:basic}

From this point onward, we denote the CRN constructed in the previous section for the metabolic insulin signaling network as $\mathscr{N} = (\mathscr{S}, \mathscr{C}, \mathscr{R})$ with mass action kinetics $K$, stoichiometric subspace $S$, set of species $\mathscr{S} = \{ X_2, \ldots, X_{21} \}$, set of complexes $\mathscr{C}$, and set of reactions $\mathscr{R} = \{ R_1, \ldots, R_{35} \}$.

Table \ref{tab:INSMS} presents the network numbers of $\mathscr{N}$. These numbers provide a cursory analysis of the structural and dynamical properties of the CRN. Properties inferred by Table \ref{tab:INSMS} include non-weak reversibility (since $s \ell = 24 \neq 13 = \ell$), $t$-minimality (since $t = 13 = \ell$), branching (since $r = 35 > 24 = n_r$), and non-point terminality and non-cycle terminality (since $t \neq n - n_r = 11$ and $n - n_r \neq 0$, respectively).

\begin{table}[ht]
    \begin{center}
	    \caption{Network numbers of $\mathscr{N}$}
	    \label{tab:INSMS}
	    \begin{tabular}{@{}llc@{}}
	        \hline
			Network numbers & & \\
			\hline
			Species & $m$ & 20 \\
			Complexes & $n$ & 35 \\
			Reactant complexes & $n_r$ & 24 \\
			Reversible reactions & $r_\text{rev}$ & 10 \\
			Irreversible reactions & $r_\text{irrev}$ & 15 \\
			Reactions & $r$ & 35 \\
			Linkage classes & $\ell$ & 13 \\
			Strong linkage classes & $s \ell$ & 24 \\
			Terminal strong linkage classes & $t$ & 13 \\
			Rank & $s$ & 15 \\
			Reactant rank & $q$ & 20 \\
			Deficiency & $\delta$ & 7 \\
			Reactant deficiency & $\delta_p$ & 4 \\
			\hline
	    \end{tabular}
    \end{center}
\end{table}

Recall that deficiency measures the linear dependence of the network’s reactions, i.e., the higher the deficiency, the higher the linear dependence. The insulin signaling system is of deficiency $\delta = 7$, indicative of high complexity of the network.

\subsubsection{Positive Dependency}

CRNToolbox \cite{CRNToolbox} results show that the metabolic insulin signaling network is positive dependent, implying that a set of rate constants exists for which the mass action system can admit a positive equilibrium.

Proposition \ref{prop:1} shows the key implication of positive dependency in a general form for rate constant-interaction map decomposable (RID) kinetic systems (a kinetics $K$ is RID if it assigns to each $i$th reaction a function $K_i (x) = k_i I_i (X)$ with rate constant $k_i > 0$ and interaction map $I_i (X) \in \mathbb{R}^\mathscr{R}$ for $X \in \mathbb{R}_{\geq 0}^\mathscr{S}$). More information about RID kinetic systems can be found in Nazareno et al \cite{NEML2019} where they were introduced. The proposition with the implication of positive dependency was shown for mass action systems by Feinberg \cite{FEIN1987}.

\begin{proposition}
	\label{prop:1}
	For any RID kinetics $K$ on a positive dependent network $\mathscr{N}$, there are rate constants such that the set of positive equilibria $E_+ (\mathscr{N}, K) \neq \varnothing$.
\end{proposition}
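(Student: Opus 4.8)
The plan is to exhibit an explicit positive equilibrium at a conveniently chosen point, using the available freedom in the rate constants to match the interaction-map values against the coefficients supplied by positive dependency. Throughout, enumerate the reactions as $R_1, \ldots, R_r$ and let $v_i$ denote the reaction vector of $R_i$, so that the SFRF of the RID kinetics is $f(X) = \sum_{i=1}^r K_i(X) v_i = \sum_{i=1}^r k_i I_i(X) v_i$.

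The central observation is that, for a strictly positive state, every interaction-map value is strictly positive. Indeed, if $X \in \mathbb{R}_{>0}^\mathscr{S}$ then $\text{supp}(X) = \mathscr{S}$, so the containment $\text{supp}(C_i) \subset \text{supp}(X)$ required by the kinetics axiom holds for every reactant complex; hence $K_i(X) > 0$, and since $k_i > 0$ this forces $I_i(X) > 0$ for each $i = 1, \ldots, r$.

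First I would fix an arbitrary reference point $X^* \in \mathbb{R}_{>0}^\mathscr{S}$ (for concreteness, the all-ones vector) and record, by the observation above, that $I_i(X^*) > 0$ for every $i$. Next I would invoke positive dependency of $\mathscr{N}$: there exist positive numbers $\alpha_1, \ldots, \alpha_r$ with $\sum_{i=1}^r \alpha_i v_i = 0$. The key step is then to choose the rate constants so as to cancel the interaction-map values at $X^*$, namely $k_i = \alpha_i / I_i(X^*)$, which is positive since both $\alpha_i$ and $I_i(X^*)$ are positive. With this choice,
\begin{equation*}
    f(X^*) = \sum_{i=1}^r k_i I_i(X^*) v_i = \sum_{i=1}^r \alpha_i v_i = 0,
\end{equation*}
so that $X^* \in E_+(\mathscr{N}, K)$ and therefore $E_+(\mathscr{N}, K) \neq \varnothing$.

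I do not anticipate a serious obstacle, since the argument is essentially a one-point construction. The only place demanding care is the justification that the interaction maps are strictly positive at a strictly positive argument: this is exactly where the kinetics axiom (rather than mere nonnegativity of $I_i$) is used, and it is what guarantees that the proposed rate constants are well defined and positive.
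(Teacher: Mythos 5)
Your proposal is correct and is essentially identical to the paper's own proof: both fix an arbitrary positive point $X^*$, take the positive coefficients $\alpha_i$ from positive dependency, set $k_i = \alpha_i / I_i(X^*)$, and verify $f(X^*) = \sum_i \alpha_i v_i = 0$. The only difference is cosmetic — you spell out via the kinetics axiom why $I_i(X^*) > 0$, where the paper simply asserts it "by definition of RID kinetics."
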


\begin{proof}
	Since $\mathscr{N}$ is positive dependent, then for each reaction $i: C_i \rightarrow C_i'$ there is a positive number $\alpha_i$ such that
	\begin{equation*}
		\sum_i \alpha_i (C_i' - C_i) = 0.
	\end{equation*}
	For a positive vector $X^*$, $I_i (X^*) > 0$ by definition of RID kinetics. Set $\displaystyle k_i^* = \frac{\alpha_i}{I_i (X^*)}$. Then
	\begin{align*}
		f(X^*) &= \sum_i K_i (X^*) (C_i' - C_i) \\
		&= \sum_i k_i^* I_i (X^*) (C_i' - C_i) \\
		&= \sum_i \alpha_i (C_i' - C_i) \\
		&= 0
	\end{align*}
	i.e., $X^* \in E_+ (\mathscr{N}, K)$.
\end{proof}

\begin{remark}
	Proposition \ref{prop:1} is a generalization of Lemma 3.5.3 of Feinberg \cite{FEIN2019}.
\end{remark}

\subsubsection{$t$-minimality}

Table \ref{tab:INSMS} shows that there is only one terminal strong linkage class in each linkage class since $t = 13 = \ell$, i.e., the network is $t$-minimal. Feinberg and Horn \cite{FEHO1977} showed that this implies the coincidence of the kinetic and stoichiometric subspaces.

The noncoincidence of the kinetic and stoichiometric subspaces is closely related to the degeneracy of equilibria. In \cite{FEWI2012}, the authors showed that if the two subspaces differ, then all equilibria are degenerate. In his book, Feinberg describes anomalies that can occur if the two subspaces do not coincide (Section 3.A.1 of \cite{FEIN2019}).

Thus, the $t$-minimality of the metabolic insulin signaling network is a necessary condition for the existence of nondegenerate equilibria as detailed in Remark 4.11 of \cite{FEWI2012}.

\subsubsection{Nonconservativity}

Recall that a reaction network $(\mathscr{S}, \mathscr{C}, \mathscr{R})$ is conservative whenever the orthogonal complement $S^\perp$ of its stoichiometric subspace $S$ contains a strictly positive member of $\mathbb{R}^\mathscr{S}$, i.e., $S^\perp \cap \mathbb{R}_{>0}^\mathscr{S} \neq \varnothing$. Otherwise, the network is called nonconservative.

The metabolic insulin signaling network has $m = 20$ species and rank $s = 15$. Thus, $S^\perp$ has dimension 5. A basis for $S^\perp$ is the set
\begin{align*}
	\{	& [0, 0, 0, 0, 0, 0, 0, 1, 1, -1, 0, 0, 0, 0, 0, 0, 0, 0, 0, 0]^\top, \\
		& [0, 0, 0, 0, 0, 0, 0, 0, 0, 1, 1, 0, 0, 0, 0, 0, 0, 0, 0, 0]^\top, \\
		& [0, 0, 0, 0, 0, 0, 0, 0, 0, 0, 0, 1, 1, 1, 0, 0, 0, 0, 0, 0]^\top, \\
		& [0, 0, 0, 0, 0, 0, 0, 0, 0, 0, 0, 0, 0, 0, 1, 1, 0, 0, 0, 0]^\top, \\
		& [0, 0, 0, 0, 0, 0, 0, 0, 0, 0, 0, 0, 0, 0, 0, 1, 1, 0, 0, 0]^\top \}
\end{align*}
which implies that $S^\perp \cap \mathbb{R}_{>0}^\mathscr{S} = \varnothing$. Therefore, the metabolic insulin signaling network is nonconservative. The positive implications of this property for ACR will be discussed in Section \ref{sec:acr}.

\subsubsection{Concordance}

The most important specific result of the CRNT analysis is the network's concordance.

Concordance is closely related to weakly monotonic kinetics as shown by Proposition 4.8 of \cite{SHFE2012}. It shows that, for any weakly monotonic kinetic system, injectivity (and, therefore, the absence of distinct stoichiometrically compatible equilibria, at least one of which is positive) can be precluded merely by establishing concordance of the underlying reaction network. The converse of the said proposition is not true, i.e., there can be a weakly monotonic kinetic system (in fact, a mass action system) that is injective even when its underlying reaction network is not concordant.

Theorem 4.11 of \cite{SHFE2012} shows that the class of concordant networks is precisely the class of networks that are injective for every assignment of a weakly monotonic kinetics.

In summary, concordance implies injectivity (of the mass action kinetics) and hence monostationarity, i.e., there is at most one positive equilibrium in each stoichiometric compatibility class. We ran the Concordance Test in the CRNToolbox which showed that the metabolic insulin signaling network is concordant. Running the Mass Action Injectivity Test and the Higher Deficiency Test confirms that the kinetics of the insulin signaling system is injective and that the system is monostationary, respectively.

\section{Independent Decompositions of the Insulin Signaling System}
\label{sec:indepDecomp}

In this section, we present the finest nontrivial independent decomposition of the metabolic insulin signaling network. We also present a deficiency-oriented coarsening of the decomposition which will be useful in the analysis of the network's ACR.

\subsection{The Finest Nontrivial Independent Decomposition}
\label{subsec:finest}

Hernandez and De la Cruz \cite{HEDC2021} provide an algorithm for constructing an independent decomposition. Their procedure utilizes a coordinate graph that represents the network using the reaction vectors of the CRN. A nontrivial independent decomposition is generated if the coordinate graph is not connected and, in this case, each connected component of the graph constitutes a partition of the set of reaction vectors of the CRN. Following their procedure, we developed a Matlab code to run the said algorithm and got the independent decomposition $\mathscr{N} = \{R_1, \ldots, R_{35}\}$ consisting of 10 subnetworks:
\begin{multicols}{2}
\noindent
\begin{align*}
	& \mathscr{N}_1 = \{ R_1, \ldots, R_{12}, R_{15}, R_{16} \} \\
	& \mathscr{N}_2 = \{ R_{13}, R_{14} \} \\
	& \mathscr{N}_3 = \{ R_{17}, R_{18}, R_{19} \} \\
	& \mathscr{N}_4 = \{ R_{20}, R_{21} \} \\
	& \mathscr{N}_5 = \{ R_{22}, R_{23} \} \\
	& \mathscr{N}_6 = \{ R_{24}, R_{25} \} \\
	& \mathscr{N}_7 = \{ R_{26}, R_{27} \} \\
	& \mathscr{N}_8 = \{ R_{28}, R_{29} \} \\
	& \mathscr{N}_9 = \{ R_{30}, \ldots, R_{33} \} \\
	& \mathscr{N}_{10} = \{ R_{34}, R_{35} \}.
\end{align*}
\end{multicols}
Furthermore, the independent decomposition obtained is precisely the finest independent decomposition of $\mathscr{N}$ \cite{HEAM2022}. The network numbers of the subnetworks are presented in Table \ref{tab:indepDecomp}.

\begin{table}[ht]
    \begin{center}
	    \caption{Network numbers of the subnetworks $\mathscr{N}_i$ of the finest independent decomposition of the metabolic insulin signaling network $\mathscr{N}$}
	    \label{tab:indepDecomp}
	    \begin{tabular}{@{}llcccccccccc@{}}
	        \hline
			Network numbers & & $\mathscr{N}_1$ & $\mathscr{N}_2$ & $\mathscr{N}_3$ & $\mathscr{N}_4$ & $\mathscr{N}_5$ & $\mathscr{N}_6$ & $\mathscr{N}_7$ & $\mathscr{N}_8$ & $\mathscr{N}_9$ & $\mathscr{N}_{10}$ \\
			\hline
			Species & $m$ & 7 & 1 & 4 & 3 & 3 & 2 & 3 & 3 & 4 & 1 \\
			Complexes & $n$ & 7 & 2 & 6 & 2 & 4 & 2 & 4 & 4 & 6 & 2 \\
			Reactant complexes & $n_r$ & 7 & 2 & 3 & 2 & 2 & 2 & 2 & 2 & 4 & 2 \\
			Reversible reactions & $r_\text{rev}$ & 5 & 1 & 0 & 1 & 0 & 1 & 0 & 0 & 1 & 1 \\
			Irreversible reactions & $r_\text{irrev}$ & 4 & 0 & 3 & 0 & 2 & 0 & 2 & 2 & 2 & 0 \\
			Reactions & $r$ & 14 & 2 & 3 & 2 & 2 & 2 & 2 & 2 & 4 & 2 \\
			Linkage classes & $\ell$ & 1 & 1 & 3 & 1 & 2 & 1 & 2 & 2 & 3 & 1 \\
			Strong linkage classes & $s \ell$ & 1 & 1 & 6 & 1 & 4 & 1 & 4 & 4 & 5 & 1 \\
			Terminal strong linkage classes & $t$ & 1 & 1 & 3 & 1 & 2 & 1 & 2 & 2 & 3 & 1 \\
			Rank & $s$ & 6 & 1 & 1 & 1 & 1 & 1 & 1 & 1 & 1 & 1 \\
			Reactant rank & $q$ & 7 & 1 & 3 & 2 & 2 & 2 & 2 & 2 & 4 & 1 \\
			Deficiency & $\delta$ & 0 & 0 & 2 & 0 & 1 & 0 & 1 & 1 & 2 & 0 \\
			Reactant deficiency & $\delta_p$ & 0 & 1 & 0 & 0 & 0 & 0 & 0 & 0 & 0 & 1 \\
			\hline
	    \end{tabular}
	\end{center}
\end{table}

Consider the coarsening $\mathscr{N} = \mathscr{N}_1^* \cup \mathscr{N}_2^* \cup \mathscr{N}_3^*$ where $\mathscr{N}_1^* = \mathscr{N}_1 \cup \mathscr{N}_2$, $\mathscr{N}_2^* = \mathscr{N}_3 \cup \ldots \cup \mathscr{N}_8$, and $\mathscr{N}_3^* = \mathscr{N}_9 \cup \mathscr{N}_{10}$. This decomposition shows the three subsystems considered by Sedaghat et al in the construction of their system: the insulin receptor binding and recycling subsystem ($\mathscr{N}_1^*$), the postreceptor signaling subsystem ($\mathscr{N}_2^*$), and the GLUT4 translocation subsystem ($\mathscr{N}_3^*$). It is significant to note that CRNT analysis using independent decomposition reveals that these subnetworks of the insulin signaling system are not just functionally but also structurally important, i.e., positive equilibria of the whole system come from equilibria of these systems.

Recall that a decomposition is said to be bi-independent if it is both independent and incidence independent.

\begin{proposition}
    \label{prop:2}
	The decomposition $\mathscr{N} = \mathscr{N}_1 \cup \ldots \cup \mathscr{N}_{10}$ is bi-independent.
\end{proposition}

\begin{proof}
	Table \ref{tab:indepDecomp} shows that $\delta = \delta_1 + \ldots + \delta_{10}$ where $\delta_i$ is the deficiency of subnetwork $\mathscr{N}_i$. Hence, by Proposition 8 of \cite{FML2021}, the decomposition is bi-independent.
\end{proof}

Note that since Proposition 7 of \cite{FML2021} implies that every coarsening is also incidence independent, then every coarsening is bi-independent as well.

\subsection{Deficiency-Oriented Coarsening}
\label{sec:def}

We next consider a deficiency-oriented coarsening of the independent decomposition. The deficiency-oriented coarsening is $\mathscr{N} = \mathscr{N}_A \cup \mathscr{N}_B$ where $\mathscr{N}_A = \mathscr{N}_1 \cup \mathscr{N}_2 \cup \mathscr{N}_4 \cup \mathscr{N}_6 \cup \mathscr{N}_{10}$ and $\mathscr{N}_B = \mathscr{N}_3 \cup \mathscr{N}_5 \cup \mathscr{N}_7 \cup \mathscr{N}_8 \cup \mathscr{N}_9$. Let $\mathscr{S}_i$, $\mathscr{C}_i$, $K_i$, and $S_i$ be the set of species, set of complexes, kinetics, and stoichiometric subspace, respectively, of $\mathscr{N}_i$ for $i = A, B$.

In $\mathscr{N}_A$, we consider together all subnetworks of the finest independent decomposition which have a deficiency of 0. On the other hand, we put together all nonzero-deficiency subnetworks in $\mathscr{N}_B$. The network numbers of $\mathscr{N}_A$ and $\mathscr{N}_B$ are presented in Table \ref{tab:defOriented}. Note that the set of common species of the subnetworks is $\mathscr{S}_A \cap \mathscr{S}_B = \{ X_4, X_5, X_{10}, X_{12}, X_{13}, X_{20} \}$ while its set of common complexes $\mathscr{C}_A \cap \mathscr{C}_B = \{ X_{13}, X_{20} \}$.

\begin{table}[ht]
    \begin{center}
	    \caption{Network numbers of the subnetworks of the deficiency-oriented coarsening of the metabolic insulin signaling network $\mathscr{N} = \mathscr{N}_A \cup \mathscr{N}_B$}
	    \label{tab:defOriented}
	    \begin{tabular}{@{}llcc@{}}
	        \hline
			Network numbers & & $\mathscr{N}_A$ & $\mathscr{N}_B$ \\
			\hline
			Species & $m$ & 13 & 13 \\
			Complexes & $n$ & 13 & 24 \\
			Reactant complexes & $n_r$ & 13 & 13 \\
			Reversible reactions & $r_\text{rev}$ & 9 & 1 \\
			Irreversible reactions & $r_\text{irrev}$ & 4 & 11 \\
			Reactions & $r$ & 22 & 13 \\
			Linkage classes & $\ell$ & 3 & 12 \\
			Strong linkage classes & $s \ell$ & 3 & 23 \\
			Terminal strong linkage classes & $t$ & 3 & 12 \\
			Rank & $s$ & 10 & 5 \\
			Reactant rank & $q$ & 12 & 11 \\
			Deficiency & $\delta$ & 0 & 7 \\
			Reactant deficiency & $\delta_p$ & 1 & 2 \\
			\hline
	    \end{tabular}
    \end{center}
\end{table}

The two subnetworks contrast in further properties besides deficiency: $\mathscr{N}_A$ is weakly reversible and nonconservative, while $\mathscr{N}_B$ is not weakly reversible and conservative. However, they are both $t$-minimal and concordant (hence monostationary). The Deficiency Zero Theorem for mass action systems, in fact, guarantees that $\mathscr{N}_A$ has a unique complex balanced equilibrium in every stoichiometric compatibility class. Because $\mathscr{N}_B$ is not weakly reversible, it has no complex balanced equilibria and there may be stoichiometric compatibility classes without an equilibrium.

Lemma \ref{lem:3} paves the way for the usefulness of the deficiency-oriented decomposition.

\begin{lemma}
	\label{lem:3}
	Let $\mathscr{N} = (\mathscr{S}, \mathscr{C}, \mathscr{R})$ be a CRN. If $\mathscr{N} = \mathscr{N}_1 \cup \mathscr{N}_2$ is an independent decomposition of $\mathscr{N}$ with $\mathscr{S}_i$, $S_i$, and $K_i$ the set of species, stoichiometric subspace, and kinetics, respectively, of $\mathscr{N}_i$ for $i = 1, 2$, then:
	\begin{enumerate}[$(i)$]
		\item If $\mathscr{S}_1 \cap \mathscr{S}_2 = \varnothing$, then for $i = 1, 2$, the projection maps $p_i: \mathbb{R}^\mathscr{S} \rightarrow \mathbb{R}^{\mathscr{S}_i}$ induce an isomorphism of $\mathbb{R}^\mathscr{S} / S \rightarrow \mathbb{R}^{\mathscr{S}_1} / S_1 \times \mathbb{R}^{\mathscr{S}_2} / S_2$ where $\mathscr{S}_1 \cup \mathscr{S}_2 = \mathscr{S}$; and
		\item If $\mathscr{S}_1 \cap \mathscr{S}_2 \neq \varnothing$, let $p_{\mathscr{CS}}: \mathbb{R}^\mathscr{S} \rightarrow \mathbb{R}^\mathscr{CS}$ be the projection to $\mathscr{CS} := \mathscr{S}_1 \cap \mathscr{S}_2$. Then $x \in E_+ (\mathscr{N}_i, K_i)$ is in $E_+ (\mathscr{N}, K)$ if and only if $p_\mathscr{CS} (x) \in p_\mathscr{CS} (E_+ (\mathscr{N}_j, K_j))$, $j \neq i$.
	\end{enumerate}
\end{lemma}

\begin{proof}
	\textcolor{white}{}
	
	$(i)$ Let $p(x + S) = (p_1 (x) + S_1, p_2 (x) + S_2)$. It is well-defined because we have the following: $x - x' \in S \Rightarrow p_i (x) - p_i (x') = p_i (x - x') \in S_i$. Since $\mathscr{S}_1 \cap \mathscr{S}_2 = \varnothing$, $x = p_1 (x) + p_2 (x) \in S$ if the projections are in $S_1$ and $S_2$, respectively, showing the injectivity. Furthermore, the number of species $m_1 + m_2 = m$ (since $\mathscr{S}_1 \cap \mathscr{S}_2 = \varnothing$) and the rank $s_1 + s_2 = s$ (due to direct sum). Hence, the domain and codomain have the same dimensions, showing the isomorphism.
	
	$(ii)$ This follows directly from Feinberg's Decomposition Theorem that if a decomposition is independent, then $E_+ (\mathscr{N}, K) = E_+ (\mathscr{N}_1, K_1) \cap E_+ (\mathscr{N}_2, K_2)$.
\end{proof}

\begin{remark}
	\textcolor{white}{}
	\begin{enumerate}
		\item Statement $(i)$ of Lemma \ref{lem:3} can be generalized to the case of $\mathscr{S}_1 \cap \mathscr{S}_2 \neq \varnothing$ but the isomorphism obtained is not directly relevant to our equilibria analysis. Hence, we have relegated it to Appendix \ref{app:lem}.
		\item Although statement $(ii)$ of Lemma \ref{lem:3} is an easy consequence of Feinberg's Decomposition Theorem, it turns out to be useful in cases where at least one of the subnetworks has a well-understood set of positive equilibria, as in the case of the metabolic insulin signaling network. This is shown in Proposition \ref{prop:4} and Example \ref{ex:1}.
	\end{enumerate}
\end{remark}

\begin{proposition}
    \label{prop:4}
	Let $(\mathscr{N}, K)$ be the metabolic insulin signaling network and $\mathscr{N} = \mathscr{N}_A \cup \mathscr{N}_B$ its deficiency-oriented decomposition. Then:
	\begin{enumerate}[$(i)$]
		\item The map $\epsilon: E_+ (\mathscr{N}, K) \rightarrow \mathbb{R}^\mathscr{S}/S$ given by $\epsilon(x) := x + S$ is injective.
		\item The map $\epsilon_A: E_+ (\mathscr{N}_A, K_A) \rightarrow \mathbb{R}^\mathscr{S}/S$ given by $\epsilon_A (x) := x + S$ is surjective.
		\item $Az + S \in \text{Im}(\epsilon)$ if and only if there is $x \in \epsilon_A^{-1} (z + S)$ such that $p_\mathscr{CS} (x) \in p_\mathscr{CS} (E_+ (\mathscr{N}_B, K_B))$.
	\end{enumerate}
\end{proposition}

\begin{proof}
	$(i)$ is equivalent to the statement that $(\mathscr{N}, K)$ is monostationary. $(ii)$ follows from the fact that the map is the composition of $\tilde{\epsilon}_A: E_+ (\mathscr{N}, K) \rightarrow \mathbb{R}^\mathscr{S}/S_A$ (surjective due to the Deficiency Zero Theorem) and the surjective map $x + S_A \rightarrow x + S$. Finally, $(iii)$ is the application of statement $(ii)$ of Lemma \ref{lem:3} to the metabolic insulin signaling network.
\end{proof}

\newpage
\begin{example}
	\label{ex:1}
	The following example illustrates the use of statement $(iii)$ of Proposition \ref{prop:4}.
	
	First, we express the positive equilibria of $\mathscr{N}_A$ in terms of the rate constants:
	\begin{multicols}{2}
	\noindent
	\begin{align*}
		& X_2 = \frac{BCDF - ACEG - BCDF + ACDH}{A^2 E G + ABDF - A^2 D H} \\
		& X_3 = \frac{k_1}{k_2 + k_5} X_2 \\
		& X_4 = \frac{CDEF}{ADEG + B D^2 F - A D^2 H} \\
		& X_5 = \frac{CDF}{ADH - BDF - AEG} \\
		& X_6 = \frac{k_{13}}{k_{14}} \\
		& X_7 = \frac{k_9}{k_{10} + k_{15}} X_4 \\
		& X_8 = \frac{k_{11}}{k_{12} + k_{16}} X_5 \\
		& X_{12} = \frac{k_{20}}{k_{21}} X_{10} X_{11} \\
		& X_{15} = \frac{k_{25}}{k_{24}} X_{13} \\
		& X_{20} = \frac{k_{34}}{k_{35}}
	\end{align*}
	\columnbreak
	\begin{align*}
		& A = \frac{k_1 k_2}{k_2 + k_5} - k_1 - k_7 \\
		& B = k_6 \\
		& C = \frac{k_8 k_{13}}{k_{14}} \\
		& D = \frac{k_9 k_{10}}{k_{10} + k_{15}}  - k_4 - k_9 \\
		& E = k_3 \\
		& F = \frac{k_1 k_5}{k_2 + k_5} \\
		& G = k_4 \\
		& H = \frac{k_{11} k_{12}}{k_{12} + k_{16}} - k_3 - k_6 - k_{11}.
	\end{align*}
	\end{multicols}
	Observe that in subnetwork $\mathscr{N}_A$, the equilibria of $X_{10}$, $X_{11}$, and $X_{13}$ are independent of the rate constants. Note also that there are restrictions on the rate constant values since the $X_i$'s have to be positive. Using the rate constants from \cite{SEDA2002}, we obtain a positive equilibrium for $\mathscr{N}_A$:
	\begin{multicols}{2}
		\noindent
		\begin{align*}
			& X_2 = 0.3700 \\
			& X_3 = 8.8788 \times 10^{-4} \\
			& X_4 = 3.2844 \\
			& X_5 = 10.9491 \\
			& X_6 = 10.0 \\
			& X_7 = 0.0150 \\
			& X_8 = 0.0499 \\
			& X_{10} = 7.0929 \times 10^{-27} \\
			& X_{11} = 0.00105 \\
			& X_{12} = 5.2614 \times 10^{-19} \\
			& X_{13} = 0.3100 \\
			& X_{15} = 0.2900 \\
			& X_{20} = 96.
		\end{align*}
	\end{multicols}
	
	Next, we substitute the values of the species common to $\mathscr{N}_A$ and $\mathscr{N}_B$ into the ODEs for $\mathscr{N}_B$ and look for a positive solution (this implements statement $(iii)$ of Proposition \ref{prop:4}):
	\begin{align*}
		& \dot{X}_9 = k_{19} X_{10} - k_{17} X_9 X_4 - k_{18} X_9 X_5 \\
		& \dot{X}_{10} = k_{17} X_9 X_4 + k_{18} X_9 X_5 - k_{19} X_{10} \\
		& \dot{X}_{13} = k_{22} X_{12} X_{14} - k_{23} X_{13} \\
		& \dot{X}_{14} = k_{23} X_{13} - k_{22} X_{12} X_{14} \\
		& \dot{X}_{16} = k_{27} X_{17} - k_{26} X_{13} X_{16} \\
		& \dot{X}_{17} = k_{26} X_{13} X_{16} - k_{27} X_{17} \\
		& \dot{X}_{18} = k_{29} X_{19} - k_{28} X_{13} X_{18} \\
		& \dot{X}_{19} = k_{28} X_{13} X_{18} - k_{29} X_{19} \\
		& \dot{X}_{20} = k_{31} X_{21} - k_{30} X_{20} - k_{32} X_{17} X_{20} - k_{33} X_{19} X_{20} \\
		& \dot{X}_{21} = k_{30} X_{20} + k_{32} X_{17} X_{20} + k_{33} X_{19} X_{20}  - k_{31} X_{21}
	\end{align*}
	Solving $\dot{X} = 0$, we get the following equilibrium for the other species in $\mathscr{N}_B$:
	\begin{align*}
		& X_9 = 1.5 \times 10^{-40} \\ 
		& X_{14} = 99.3 \\
		& X_{17} = (5.1269 \times 10^{-9}) X_{16} \\ 
		& X_{19} = (5.1269 \times 10^{-9}) X_{18} \\ 
		& X_{21} = (6.7676 \times 10^{-9}) X_{16} + (2.7070 \times 10^{-8}) X_{18} + 4.0 
	\end{align*}
	Note that the equilibrium values of $X_{17}$, $X_{19}$, and $X_{21}$ depend on any values of $X_{16}$ and $X_{18}$. If we (randomly) choose $X_{16} = 99.9$ and $X_{18} = 99.9$, we get the following:
	\begin{align*}
		& X_{17} = 5.1218 \times 10^{-7} \\
		& X_{19} = 5.1218 \times 10^{-7} \\
		& X_{21} = 4.0000.
	\end{align*}
\end{example}

\begin{remark}
	Example \ref{ex:1} provides an alternative method for solving for positive equilibria using smaller subnetworks instead of the (more complex) entire network.
\end{remark}

\section{ACR of Species in Insulin Signaling}
\label{sec:acr}

ACR denotes the invariance of the concentrations of a species at all positive equilibria of a kinetic system. Shinar and Feinberg introduced the concept in 2010 \cite{SHFE2010} and, from experimental observations in \textit{Escherichia coli} subsystems, extracted sufficient (mathematical) conditions for the property. In this section, after a brief review of relevant results on ACR, we present an analysis of the property in the insulin signaling system.

\subsection{Absolute Concentration Robustness}

A pair of reactant complexes $\{ C, C' \}$ is a \textbf{Shinar-Feinberg pair} (SF-pair) \textbf{in species $X$} if their kinetic order vectors (i.e., the corresponding row in the kinetic order matrix) differ only in $X$. In mass action systems, the stoichiometric coefficients of the complexes play the role of the kinetic order vectors.

An SF-pair is called \textbf{nonterminal} if both complexes are not in terminal strong linkage classes. The pair is said to be \textbf{linked} if both complexes are in a linkage class.

Proposition 5.7 of \cite{LLMM2022} presents a framework for ACR using SF-pairs. A special case of condition $(ii)$ of the said proposition is any weakly reversible power law system with reactant-determined kinetics, zero deficiency, and a linked SF-pair in $X$. A proof of ACR in $X$ in this case was already provided in Theorem 6 in the Appendix of \cite{FOME2021}. We refer the reader to \cite{LLMM2022} for the detailed discussion of positive equilibria log-parametrized systems and a proof of the general framework. We apply Proposition 5.7 of \cite{LLMM2022} only in the special case of the deficiency zero subnetwork of the metabolic insulin signaling network.

We will often call subnetworks with the property $(i)$ or $(ii)$ of Proposition 5.7 of \cite{LLMM2022} ``(low deficiency) ACR building blocks''. A computational approach to the framework was developed in \cite{FOMF2021}. For the ACR analysis, we implemented this algorithm using Matlab.

Another theorem that we will apply to the ACR analysis of the metabolic insulin signaling network is Theorem 5.5 of \cite{MEST2021} regarding the stable ACR for one-dimensional networks which we will refer to as the ``Meshkat et al criterion''. As defined by Meshkat et al \cite{MEST2021}, a kinetic system has stable ACR in a species $X$ (for a set of rate constants) if all the steady states of the kinetic system (for the set of rate constants) are stable.

\begin{remark}
	The sufficient conditions for ACR in a species in the works of Lao et al \cite{LLMM2022} and Meshkat et al \cite{MEST2021} establish the property for all rate constants for which the system has a positive equilibrium. Meshkat et al have proposed the convention of ``vacuous ACR'' for the case in which no positive equilibrium exists: in such a scenario, ACR in all species is assumed. This enables the more convenient terminology that the above sufficient conditions provide ACR ``for all rate constants'' and we adopt this terminology in the following analysis.
\end{remark}

\subsection{ACR Analysis of Rank 1 Subnetworks of the Finest Independent Decomposition}

The finest independent decomposition using the Hernandez-De la Cruz algorithm consists of one subnetwork of rank 6 ($\mathscr{N}_1$) and 9 subnetworks of rank 1 ($\mathscr{N}_2, \ldots, \mathscr{N}_{10}$) (see Table \ref{tab:indepDecomp}). It is natural to first try to apply the ACR criterion of Meshkat et al to the latter set of subnetworks.

Condition $2(b)$ of the Meshkat et al criterion says that all reactions, taken pairwise, must be SF-pairs in species $X$. Table \ref{tab:meshkat} shows the result of the verification.

\begin{table}[ht]
    \begin{center}
	    \caption{Verification of condition $2(b)$ of the Meshkat et al criterion}
	    \label{tab:meshkat}
	    \begin{tabular}{@{}|c|c|c|c|@{}}
	        \hline
			Subnetwork & Reactions & Non-SF-pair, e.g. & $2(b)$ satisfied for species $X_i$ \\
			\hline
			$\mathscr{N}_2$ & $R_{13}$, $R_{14}$ & none & Yes for $X_6$ \\
			\hline
			$\mathscr{N}_3$ & $R_{17}$, $R_{18}$, $R_{19}$ & $R_{17}$, $R_{18}$ & No \\
			\hline
			$\mathscr{N}_4$ & $R_{20}$, $R_{21}$ & $R_{20}$, $R_{21}$ & No \\
			\hline
			$\mathscr{N}_5$ & $R_{22}$, $R_{23}$ & $R_{22}$, $R_{23}$ & No \\
			\hline
			$\mathscr{N}_6$ & $R_{24}$, $R_{25}$ & $R_{24}$, $R_{25}$ & No \\
			\hline
			$\mathscr{N}_7$ & $R_{26}$, $R_{27}$ & $R_{26}$, $R_{27}$ & No \\
			\hline
			$\mathscr{N}_8$ & $R_{28}$, $R_{29}$ & $R_{28}$, $R_{29}$ & No \\
			\hline
			$\mathscr{N}_9$ & $R_{30}$, $R_{31}$, $R_{32}$, $R_{33}$ & $R_{30}$, $R_{31}$ & No \\
			\hline
			$\mathscr{N}_{10}$ & $R_{34}$, $R_{35}$ & none & Yes for $X_{20}$ \\
			\hline
	    \end{tabular}
    \end{center}
\end{table}

Subnetworks $\mathscr{N}_2$ and $\mathscr{N}_{10}$ satisfy condition $2(a)$ of the Meshkat et al criterion as well since they each consist of a reversible pair and are identical to the required embedded network. Hence, $X_6$ and $X_{20}$ have ACR for all rate constants in the whole network since the decomposition is independent (see Proposition 4.4 of \cite{LLMM2022}).

\begin{remark}
	\textcolor{white}{}
	\begin{enumerate}
		\item The Meshkat et al criterion ensures stability of the equilibria only within the subnetworks and not for the whole network so we cannot infer any claim about the property for $X_6$ and $X_{20}$.
		\item In a similar vein, non-ACR for a species in the subnetwork does not necessarily imply non-ACR in the whole network since the set of positive equilibria of the latter is generally smaller than that of the former.
	\end{enumerate}
\end{remark}

\subsection{ACR Analysis of Subnetworks in the Deficiency-Oriented Independent Decomposition}
\label{sec:acrDef}

Implementing the algorithm of Fontanil et al \cite{FOMF2021} and using Proposition 5.7 of \cite{LLMM2022}, we discover Shinar-Feinberg reaction pairs in appropriate low-deficiency subnetworks of coarsenings of the finest independent decomposition in Section \ref{subsec:finest}. We find that the system has ACR in 8 species (from the zero-deficiency building blocks): $X_2$, $X_3$, $X_4$, $X_5$, $X_6$, $X_7$, $X_8$, $X_{20}$. The other species are not identified as having ACR because the reactant complexes of their associated SF-pairs are not nonterminal in the subnetwork generated by the building block. Table \ref{tab:ACR} provides an overview.

It is particularly significant that the system's output to glucose energy metabolism \linebreak ($X_{20} = \text{intracellular GLUT4}$) has ACR. GLUT4 is a key transporter of glucose into neurons. Under healthy conditions (i.e., no insulin resistance), the insulin signaling system works such that GLUT4, which determines the amount of glucose transported into a cell, is held constant. GLUT4, coupled with adequate glucose supply, enables reliable cellular energy production. Keeping the value of GLUT4 is very important for glucose energy metabolism, showing real robustness in the system. Energy processing of neurons works properly due to this robustness.

\begin{table}[ht!]
    \begin{center}
	    \caption{Species with ACR}
	    \label{tab:ACR}
	    \begin{tabular}{@{}|c|c|c|c|c|@{}}
	        \hline
			Species & SF-pair & ACR building block & Deficiency & Comments \\
			\hline
			$X_2$ & $R_1$, $R_{13}$ & $\mathscr{N}_A$ & 0 & linked SF-pair \\
			 & $R_1$, $R_{34}$ & $\mathscr{N}_A$ & 0 & linked SF-pair \\
			 & $R_7$, $R_{13}$ & $\mathscr{N}_A$ & 0 & linked SF-pair \\
			 & $R_7$, $R_{34}$ & $\mathscr{N}_A$ & 0 & linked SF-pair \\
			\hline
			$X_3$ & $R_2$, $R_{13}$ & $\mathscr{N}_A$ & 0 & linked SF-pair \\
			 & $R_2$, $R_{34}$ & $\mathscr{N}_A$ & 0 & linked SF-pair \\
			 & $R_5$, $R_{13}$ & $\mathscr{N}_A$ & 0 & linked SF-pair \\
			 & $R_5$, $R_{35}$ & $\mathscr{N}_A$ & 0 & linked SF-pair \\
			\hline
			$X_4$ & $R_4$, $R_{13}$ & $\mathscr{N}_A$ & 0 & linked SF-pair \\
			 & $R_4$, $R_{34}$ & $\mathscr{N}_A$ & 0 & linked SF-pair \\
			 & $R_9$, $R_{13}$ & $\mathscr{N}_A$ & 0 & linked SF-pair \\
			 & $R_9$, $R_{34}$ & $\mathscr{N}_A$ & 0 & linked SF-pair \\
			\hline
			$X_5$ & $R_3$, $R_{13}$ & $\mathscr{N}_A$ & 0 & linked SF-pair \\
			 & $R_3$, $R_{34}$ & $\mathscr{N}_A$ & 0 & linked SF-pair \\
			 & $R_6$, $R_{13}$ & $\mathscr{N}_A$ & 0 & linked SF-pair \\
			 & $R_6$, $R_{34}$ & $\mathscr{N}_A$ & 0 & linked SF-pair \\
			 & $R_{11}$, $R_{13}$ & $\mathscr{N}_A$ & 0 & linked SF-pair \\
			 & $R_{11}$, $R_{34}$ & $\mathscr{N}_A$ & 0 & linked SF-pair \\
			\hline
			$X_6$ & $R_8$, $R_{13}$ & $\mathscr{N}_A$ & 0 & linked SF-pair \\
			 & $R_8$, $R_{34}$ & $\mathscr{N}_A$ & 0 & linked SF-pair \\
			 & $R_{13}$, $R_{14}$ & $\mathscr{N}_A$ & 0 & linked SF-pair \\
			 & $R_{14}$, $R_{34}$ & $\mathscr{N}_A$ & 0 & linked SF-pair \\
			\hline
			$X_7$ & $R_{10}$, $R_{13}$ & $\mathscr{N}_A$ & 0 & linked SF-pair \\
			 & $R_{10}$, $R_{34}$ & $\mathscr{N}_A$ & 0 & linked SF-pair \\
			 & $R_{13}$, $R_{15}$ & $\mathscr{N}_A$ & 0 & linked SF-pair \\
			 & $R_{15}$, $R_{34}$ & $\mathscr{N}_A$ & 0 & linked SF-pair \\
			\hline
			$X_8$ & $R_{12}$, $R_{13}$ & $\mathscr{N}_A$ & 0 & linked SF-pair \\
			 & $R_{12}$, $R_{34}$ & $\mathscr{N}_A$ & 0 & linked SF-pair \\
			 & $R_{13}$, $R_{16}$ & $\mathscr{N}_A$ & 0 & linked SF-pair \\
			 & $R_{16}$, $R_{34}$ & $\mathscr{N}_A$ & 0 & linked SF-pair \\
			\hline
			$X_{13}$ & $R_{13}$, $R_{23}$ & $\mathscr{N}_2 \cup \mathscr{N}_5$ & 1 &  \\
			 & $R_{23}$, $R_{34}$ & $\mathscr{N}_5 \cup \mathscr{N}_{10}$ & 1 &  \\
			\hline
			$X_{14}$ & $R_{21}$, $R_{22}$ & $\mathscr{N}_4 \cup \mathscr{N}_5$ & 1 &  \\
			\hline
			$X_{16}$ & $R_{25}$, $R_{26}$ & $\mathscr{N}_6 \cup \mathscr{N}_7$ & 1 &  \\
			\hline
			$X_{17}$ & $R_{13}$, $R_{27}$ & $\mathscr{N}_2 \cup \mathscr{N}_7$ & 1 &  \\
			 & $R_{27}$, $R_{34}$ & $\mathscr{N}_7 \cup \mathscr{N}_{10}$ & 1 &  \\
			\hline
			$X_{18}$ & $R_{25}$, $R_{28}$ & $\mathscr{N}_6 \cup \mathscr{N}_8$ & 1 &  \\
			\hline
			$X_{19}$ & $R_{13}$, $R_{29}$ & $\mathscr{N}_2 \cup \mathscr{N}_8$ & 1 &  \\
			 & $R_{29}$, $R_{34}$ & $\mathscr{N}_8 \cup \mathscr{N}_{10}$ & 1 &  \\
			\hline
			$X_{20}$ & $R_{13}$, $R_{35}$ & $\mathscr{N}_A$ & 0 & linked SF-pair \\
			 & $R_{34}$, $R_{35}$ & $\mathscr{N}_A$ & 0 & linked SF-pair \\
			\hline
	    \end{tabular}
    \end{center}
\end{table}

\begin{remark}
	The ACR in 8 species of the network is inferred from ACR in the deficiency zero subnetwork $\mathscr{N}_A$. A necessary condition for this occurrence is the nonconservativity of $\mathscr{N}$. If $\mathscr{N}$ were conservative, i.e., there was a positive vector in $S^\perp$, then $S^\perp = (S_A + S_B)^\perp = S_A^\perp \cap S_B^\perp$. Hence, $\mathscr{N}_A$ would be a conservative deficiency zero mass action network and have no ACR in any species (Theorem 9.7.1 of \cite{FEIN2019}).
\end{remark}

\subsection{ACR Analysis of the Metabolic Insulin Signaling Network for a Set of Rate Constants}

Table \ref{tab:ACR} shows that there are SF-pairs in deficiency one subnetworks in further independent coarsenings of the metabolic insulin signaling network. The failure of the three sufficient conditions for ACR for all rate constants (Meshkat et al criterion) led us to suspect that the remaining species (i.e., $X_{13}$, $X_{14}$, $X_{16}$, $X_{17}$, $X_{18}$, and $X_{19}$) do not have this property. The test is done using the set of rate constants available from \cite{SEDA2002}. 

Example \ref{ex:1} in Section \ref{sec:def} presents a positive equilibrium of the metabolic insulin signaling network. It is evident from the example that positive equilibria for the metabolic insulin signaling network have the same value for $X_2, \ldots, X_8$ and $X_{20}$. For the other 12 species, their equilibrium value is dependent on the following variables: $X_{10}$, $X_{11}$, $X_{13}$, $X_{16}$, and $X_{18}$. This suggests that there are infinitely many possible positive equilibria.

Our computations verify that, for the set of rate constants used above, species $X_2, \ldots, X_8$ and $X_{20}$ have stable ACR while the remaining species do not have ACR (since the 5 species identified as ``independent variables'' can take various values and the remaining species will vary according to the variation of the variables).

These computations indeed confirm that species $X_{13}$, $X_{14}$, $X_{16}$, $X_{17}$, $X_{18}$, and $X_{19}$ do not have ACR for all rate constants.

\begin{remark}
	The number of species exhibiting ACR (which we call ``ACR species'' for short) is an inverse measure of the variation in the equilibria composition: the more ACR species there are, the less the variation. An extreme case is ACR in all species, which is equivalent to the system having a unique equilibrium (in the entire species space). With 8 ACR species among 20, the metabolic insulin signaling network has a fairly high variation in equilibria composition. The deficiency-oriented decomposition reveals the source of this variation. The deficiency zero subnetwork $\mathscr{N}_A$ contains all the ACR species. Furthermore, since the insulin signaling system is a log-parametrized system, it follows from the results of Lao et al \cite{LLMM2022} that the number of ACR species (among the total 13 species in the subnetwork) is bounded by the subnetwork's rank ($s_A = 10$). This shows that the high variation in equilibria composition is caused primarily by the  lack of ACR species among the 13 species in the deficiency 7 subnetwork $\mathscr{N}_B$.
\end{remark}

\section{Summary and Outlook}
\label{sec:conclusion}

The insulin signaling system is an important metabolic system. In this study, we derived a CRN of the Sedaghat et al insulin signaling model with 20 species, 35 complexes, and 35 reactions. We have shown that it is a nonconservative, non-weakly reversible, and high deficiency ($\delta = 7$) system. The positive dependence of the reaction network ensures the existence of rate constants under which the mass action system has positive equilibria (Proposition \ref{prop:1}). Additionally, the network's $t$-minimality implies that the kinetic and stoichiometric subspaces coincide, which is necessary for the existence of nondegenerate equilibria. Moreover, the network is concordant, which implies that the system's species formation rate function, when restricted to any stoichiometric compatibility class, is injective. It follows, then, that the kinetic system is monostationary, i.e., there is at most one positive equilibrium in each stoichiometric compatibility class.

We obtained the finest independent decomposition of the metabolic insulin signaling network $\mathscr{N} = \{R_1, \ldots, R_{35}\}$ consisting of 10 subnetworks which we have shown to be bi-independent (Proposition \ref{prop:2}). CRNT analysis using a coarsening of the decomposition revealed three subnetworks of the metabolic insulin signaling network which are not only functionally but also structurally important. Upon considering a deficiency-oriented coarsening of the finest decomposition, we have shown how a binary decomposition can be viewed in relation to a network's set of positive equilibria (Lemma \ref{lem:3}). We have also developed a method of determining positive equilibria of the metabolic insulin signaling network using its deficiency-oriented coarsening (Proposition \ref{prop:4}). This provides an alternative method for solving positive equilibria analytically using smaller subnetworks instead of the (more complex) entire network.

For the ACR analysis, we have shown that subnetworks $\mathscr{N}_2$ and $\mathscr{N}_{10}$ satisfy the Meshkat et al criterion, but species $X_6$ and $X_{20}$ (which come from the said subnetworks) have ACR for all rate constants only in their respective subnetwork. This implies that the stability of the equilibria is only within the subnetworks and not for the whole network. Upon implementing the algorithm of Fontanil et al \cite{FOMF2021}, however, we found that the system had ACR in 8 species. This restricts the variability of the positive equilibria and also suggests that this ``structural source of robustness'' may be an important factor in the system's overall robustness. However, overall there is still a high variation in equilibria composition due to the lack of ACR species in the deficiency 7 subnetwork of the network's deficiency-oriented coarsening. For the rate constants used in the study of Sedaghat et al \cite{SEDA2002}, we have verified that species $X_2, \ldots, X_8$, and $X_{20}$ indeed have stable ACR. Interestingly, $X_{20}$, i.e., the insulin-regulated glucose transporter GLUT4, plays an important role in glucose energy metabolism.

Our analysis of the Sedaghat et al model is the first part of a two-step research effort on metabolic insulin signaling. In their paper, Sedaghat et al constructed the model from data of healthy cells. As a second step, we have started a reaction network analysis of a model of metabolic insulin signaling by Br{\"a}nnmark et al \cite{BNFBECS2013} based on cell data from type 2 diabetes patients, i.e., cells with insulin resistance. Preliminary results already indicate significant differences to our main results on the Sedaghat et al model \cite{LLM2022}.



\begin{appendices}

\counterwithin{example}{section}

\section{Notations and Definition of Terms}

In this section, we lay the foundation of Chemical Reaction Network Theory (CRNT) by discussing the definition of terms used in the paper. After discussing the fundamentals of chemical reaction networks and kinetic systems, we review important terminologies related to decomposition theory.

\subsection{Chemical Reaction Networks}

A \textbf{chemical reaction network} (CRN) $\mathscr{N}$ is a triple $(\mathscr{S}, \mathscr{C}, \mathscr{R})$ of nonempty finite sets $\mathscr{S}$, $\mathscr{C}$, and $\mathscr{R}$ of $m$ species, $n$ complexes, and $r$ reactions, respectively, where $\mathscr{C} \subseteq \mathbb{R}_{\geq 0}^{\mathscr{S}}$ and $\mathscr{R} \subset \mathscr{C} \times \mathscr{C}$ satisfying the following:
\begin{enumerate}[($i$)]
	\item $(C, C) \notin \mathscr{R}$ for any $C \in      \mathscr{C}$; and
	\item For each $C_i \in \mathscr{C}$, there exists $C_j \in \mathscr{C}$ such that $(C_i, C_j) \in    \mathscr{R}$ or $(C_j, C_i) \in \mathscr{R}$.
\end{enumerate}

Given a set $I$, $\mathbb{R}^I$ refers to the vector space of real-valued functions with domain $I$. If $x$ is a vector in $\mathbb{R}^I$, we use the symbol $x_i$ to denote the number that $x$ assigns to $i \in I$. In the context of a CRN, $\mathbb{R}^{\mathscr{S}}$, $\mathbb{R}^{\mathscr{C}}$, and $\mathbb{R}^{\mathscr{R}}$ are referred to as the \textbf{species space}, \textbf{complex space}, and \textbf{reaction space}, respectively.

In a CRN, we denote the species as $X_1, \ldots, X_m$. This way, $X_i$ can be identified with the vector in $\mathbb{R}^m$ with 1 in the $i$th coordinate and zero elsewhere. We denote the reactions as $R_1, \ldots, R_r$. We denote the complexes as $C_1, \ldots, C_n$ where the manner in which the complexes are numbered play no essential role. A complex $C_i \in \mathscr{C}$ is given as $\displaystyle C_i = \sum_{j=1}^m c_{ij} X_j$ or as the vector $(c_{i1}, \ldots, c_{im}) \in \mathbb{R}^m$. The coefficient $c_{ij}$ is called the \textbf{stoichiometric coefficient} of species $X_j$ in complex $C_i$. Stoichiometric coefficients are all nonnegative numbers. We define the \textbf{zero complex} as the zero vector in $\mathbb{R}^m$. A reaction $0 \rightarrow X$ is called an \textbf{inflow reaction} while a reaction $X \rightarrow 0$ is called an \textbf{outflow reaction}. The ordered pair $(C_i, C_j)$ corresponds to the familiar notation $C_i \rightarrow C_j$ which indicates the reaction where complex $C_i$ reacts to complex $C_j$. We call $C_i$ the \textbf{reactant complex} and $C_j$ the \textbf{product complex}. We denote the number of reactant complexes as $n_r$. A reaction $C_i \rightarrow C_j$ is called \textbf{reversible} if it is accompanied by its reverse reaction $C_j \rightarrow C_i$. Otherwise, it is called \textbf{irreversible}.

\bigskip

\begin{example}
	Consider the reaction
	\begin{equation*}
		2 X_1 + X_2 \rightarrow 2 X_3.
	\end{equation*}
	$X_1$, $X_2$, and $X_3$ are the species, the reactant complex is $2 X_1 + X_2$, and $2 X_3$ is the product complex. The stoichiometric coefficients are 2, 1, and 2 for $X_1$, $X_2$, and $X_3$, respectively.
\end{example}

Let $\mathscr{N} = (\mathscr{S}, \mathscr{C}, \mathscr{R})$ be a CRN. For each reaction $C_i \rightarrow C_j \in \mathscr{R}$, we associate the \textbf{reaction vector} $C_j - C_i \in \mathbb{R}^m$. The linear subspace of $\mathbb{R}^m$ spanned by the reaction vectors is called the \textbf{stoichiometric subspace} of $\mathscr{N}$, defined as $S = \text{span}\{C_j - C_i \in \mathbb{R}^m \mid C_i \rightarrow C_j \in \mathscr{R}\}$. The \textbf{rank} of $\mathscr{N}$ is given by $s = \text{dim}(S)$, i.e., the rank of the network is the rank of its set of reaction vectors. In this paper, we sometimes use the notation $\mathscr{N} = \{R_1, \ldots, R_r\}$ where we loosely use the notation $R_i$ to refer to either reaction $i$ or its corresponding reaction vectors.

Two vectors $x^*, x^{**} \in \mathbb{R}^m$ are said to be \textbf{stoichiometrically compatible} if $x^* - x^{**}$ is an element of the stoichiometric subspace $S$. Stoichiometric compatibility is an equivalence relation that induces a partition of $\mathbb{R}_{\geq 0}^{\mathscr{S}}$ or $\mathbb{R}_{>0}^{\mathscr{S}}$ into equivalence classes called the \textbf{stoichiometric compatibility classes} or \textbf{positive stoichiometric compatibility classes}, respectively, of the network. In particular, the stoichiometric compatibility class containing $x \in \mathbb{R}_{\geq 0}^{\mathscr{S}}$ is the set $(x + S) \cap \mathbb{R}_{\geq 0}^{\mathscr{S}}$ where $x + S$ is the left coset of $S$ containing $x$. Similarly, the positive stoichiometric compatibility class containing $x \in \mathbb{R}_{>0}^{\mathscr{S}}$ is the set $(x + S) \cap \mathbb{R}_{>0}^{\mathscr{S}}$.

The \textbf{molecularity matrix} $Y$ is an $m \times n$ matrix whose entry $Y_{ij}$ is the stoichiometric coefficient of species $X_i$ in complex $C_j$. The \textbf{incidence matrix} $I_a$ is an $n \times r$ matrix whose entry $(I_a)_{ij}$ is defined as follows:
\begin{equation*}
	(I_a)_{ij} =
	\left\{
        	\begin{array}{rl}
        		-1	& \text{if $C_i$ is the reactant complex of reaction $R_j$} \\
            	 1	& \text{if $C_i$ is the product complex of reaction $R_j$} \\
        		 0	& \text{otherwise} \\
        	\end{array}
        \right..
\end{equation*}
The \textbf{stoichiometric matrix} $N$ is the $m \times r$ matrix given by $N = Y I_a$. The columns of $N$ are the reaction vectors of the system. From the definition of stoichiometric subspace, we can see that $S$ is the image of $N$, written as $S = \text{Im}(N)$. Observe that $s = \text{dim}(S) = \text{dim}(\text{Im}(N)) = \text{rank}(N)$.

\bigskip

\begin{example}
	\label{ex:2}
	Consider the following CRN:
	\begin{align*}
		& R_1: 2 X_1 \rightarrow X_3 \\
		& R_2: X_2 + X_3 \rightarrow X_3 \\
		& R_3: X_3 \rightarrow X_2 + X_3 \\
		& R_4: 3 X_4 \rightarrow X_2 + X_3 \\
		& R_5: 2 X_1 \rightarrow 3 X_4.
	\end{align*}
	The set of species and complexes are $\mathscr{S} = \{ X_1, X_2, X_3, X_4 \}$ and $\mathscr{C} = \{ 2 X_1, X_2 + X_3, X_3, 3 X_4 \}$, respectively. Thus, there are $m = 4$ species, $n = 4$ complexes, $n_r = 4$ reactant complexes, and $r = 5$ reactions. The network's molecularity matrix, incidence matrix, and stoichiometric matrix are as follows:
	\begin{multicols}{2}
		\begin{center}
			\[ Y =
				\begin{blockarray}{ccccc}
						2 X_1 & X_2 + X_3 & X_3 & 3 X_4	\\
					\begin{block}{[cccc]l}
						2 & 0 & 0 & 0 \topstrut & X_1	\\
						0 & 1 & 0 & 0 & X_2	\\
						0 & 1 & 1 & 0 & X_3 \\
						0 & 0 & 0 & 3 \botstrut	& X_4	\\
					\end{block}
				\end{blockarray}
			\]
		\end{center}
		\begin{center}
			\[ I_a =
				\begin{blockarray}{rrrrrr}
						R_1 & R_2 & R_3 & R_4 & R_5\textcolor{white}{1}	\\
					\begin{block}{[rrrrr]l}
						\textcolor{white}{1}-1 & 0 & 0 & 0 & -1\textcolor{white}{1} \topstrut & 2 X_1	\\
						0 & -1 & 1 & 1 & 0\textcolor{white}{1} & X_2 + X_3	\\
						1 & 1 & -1 & 0 & 0\textcolor{white}{1} & X_3 \\
						0 & 0 & 0 & -1 & 1\textcolor{white}{1} \botstrut	& 3 X_4	\\
					\end{block}
				\end{blockarray}
			\]
		\end{center}
	\end{multicols}
	\begin{equation*}
		N = Y I_a =
		\left[
			\begin{array}{rrrrr}
			-2 & 0 & 0 & 0 & -2 \\
			0 & -1 & \textcolor{white}{-}1 & 1 & 0 \\
			1 & 0 & 0 & 1 & 0 \\
			0 & 0 & 0 & -3 & 3 \\
			\end{array}
		\right].
	\end{equation*}
	The network has rank $s = \text{rank}(N) = 3$.
\end{example}

CRNs can be viewed as directed graphs where the complexes are represented by vertices and the reactions by edges. The \textbf{linkage classes} of a CRN are the subnetworks of its reaction graph where for any complexes $C_i$ and $C_j$ of the subnetwork, there is a path between them. The number of linkage classes is denoted by $\ell$. The linkage class is said to be a \textbf{strong linkage class} if there is a directed path from $C_i$ to $C_j$, and vice versa, for any complexes $C_i$ and $C_j$ of the subnetwork. The number of strong linkage classes is denoted by $s \ell$. Moreover, \textbf{terminal strong linkage classes}, the number of which is denoted as $t$, are the maximal strongly connected subnetworks where there are no edges (reactions) from a complex in the subgraph to a complex outside the subnetwork. Complexes belonging to terminal strong linkage classes are called \textbf{terminal}; otherwise, they are called \textbf{nonterminal}.

In Example \ref{ex:2}, the number of linkage classes is $\ell = 1$: $\{2 X_1, X_3, X_2 + X_3, 3 X_4\}$; the number of strong linkage classes is $s \ell = 3$: $\{X_3, X_2 + X_3\}, \{2 X_1\}, \{3 X_4\}$; and the number of terminal strong linkage classes is $t = 1$: $\{X_3, X_2 + X_3\}$. $X_3$ and $X_2 + X_3$ are terminal complexes while $2 X_1$ and $3 X_4$ are nonterminal complexes.

A CRN is called \textbf{weakly reversible} if $s \ell = \ell$, \textbf{$t$-minimal} if $t = \ell$, \textbf{point terminal} if $t = n - n_r$, and \textbf{cycle terminal} if $n - n_r = 0$. The \textbf{deficiency} of a CRN is given by $\delta = n - \ell - s$.

For a CRN $\mathscr{N}$, the linear subspace of $\mathbb{R}^m$ generated by the reactant complexes is called the \textbf{reactant subspace} of $\mathscr{N}$, defined as $R = \text{span}\{C_i \in \mathbb{R}^m \mid C_i \rightarrow C_j \in \mathscr{R}\}$. The \textbf{reactant rank} of $\mathscr{N}$ is given by $q = \text{dim}(R)$, i.e., the reactant rank of the network is the rank of its set of complexes. The \textbf{reactant deficiency} of $\mathscr{N}$ is given by $\delta_p = n_r - q$.

To make sense of the reactant subspace $R$, write the incidence matrix as $I_a = I_a^+ - I_a^-$ where $I_a^+$ consists only of the 0's and 1's in $I_a$ while $I_a^-$ contains only the 0's and absolute values of the $-1$'s. We form the \textbf{reactant matrix} $N^-$ (size $m \times r$) given by $N^- = Y I_a^-$. The columns of $N^-$ contains the reactant complexes of the system. From the definition of reactant subspace, we can see that $R$ is the image of $N^-$, written as $R = \text{Im}(N^-)$. Observe that $q = \text{dim}(R) = \text{dim}(\text{Im}(N^-)) = \text{rank}(N^-)$.

The incidence matrix of the network in Example \ref{ex:2} can be written as
\begin{align*}
	I_a &= I_a^+ - I_a^- \\
	\left[
		\begin{array}{rrrrr}
			-1 & 0 & 0 & 0 & -1 \\
			0 & -1 & 1 & 1 & 0 \\
			1 & 1 & -1 & 0 & 0 \\
			0 & 0 & 0 & -1 & 1 \\
		\end{array}
	\right]
	&=
	\left[
		\begin{array}{rrrrr}
			0 & 0 & 0 & 0 & 0 \\
			0 & 0 & 1 & 1 & 0 \\
			1 & 1 & 0 & 0 & 0 \\
			0 & 0 & 0 & 0 & 1 \\
		\end{array}
	\right]
	-
	\left[
		\begin{array}{rrrrr}
			1 & 0 & 0 & 0 & 1 \\
			0 & 1 & 0 & 0 & 0 \\
			0 & 0 & 1 & 0 & 0 \\
			0 & 0 & 0 & 1 & 0 \\
		\end{array}
	\right]
\end{align*}
allowing us to form the reactant matrix
\begin{equation*}
	N^- = Y I_a^- =
	\left[
		\begin{array}{rrrrr}
		2 & 0 & 0 & 0 & 2 \\
		0 & 1 & 0 & 0 & 0 \\
		0 & 1 & 1 & 0 & 0 \\
		0 & 0 & 0 & 3 & 0 \\
		\end{array}
	\right].
\end{equation*}
The network has reactant rank $q = \text{rank}(N^-) = 4$ and reactant deficiency $\delta_p = n_r - q = 4 - 4 = 0$.

\subsection{Chemical Kinetic Systems}

A \textbf{kinetics} $K$ for a CRN $\mathscr{N} = (\mathscr{S}, \mathscr{C}, \mathscr{R})$ is an assignment to each reaction $C_i \rightarrow C_j \in \mathscr{R}$ of a rate function $K_{C_i \rightarrow C_j}: \mathbb{R}_{\geq 0}^{\mathscr{S}} \rightarrow \mathbb{R}_{\geq 0}$ such that
\begin{equation*}
	K_{C_i \rightarrow C_j} (X) > 0 \text{ if and only if } \text{supp}(C_i) \subset \text{supp}(X).
\end{equation*}
The system $(\mathscr{N}, K)$ is called a \textbf{chemical kinetic system} (CKS). The \textbf{support} of complex $C_i \in \mathscr{C}$ is $\text{supp}(C_i) = \{X_j \in \mathscr{S} \mid c_{ij} \neq 0\}$, i.e, it is the set of all species that have nonzero stoichiometric coefficients in complex $C_i$.

A kinetics gives rise to two closely related objects: the species formation rate function and the associated ordinary differential equation system.

The \textbf{species formation rate function} (SFRF) of a CKS is given by
\begin{equation*}
	f(X) = \sum_{C_i \rightarrow C_j} K_{C_i \rightarrow C_j} (X) (C_j - C_i)
\end{equation*}
where $X$ is the vector of species in $\mathscr{S}$ and $K_{C_i \rightarrow C_j}$ is the rate function assigned to reaction $C_i \rightarrow C_j \in \mathscr{R}$. The SFRF is simply the summation of the reaction vectors for the network, each multiplied by the corresponding rate function. The \textbf{kinetic subspace} $\mathcal{K}$ for a CKS is the linear subspace of $\mathbb{R}^{\mathscr{S}}$ defined by $\mathcal{K} = \text{span}\{ \text{Im}(f) \}.$ Note that the SFRF can be written as $f(X) = N K(X)$ where $K$ the vector of rate functions. The equation $\dot{X} = f(X)$ is the \textbf{ordinary differential equation} (ODE) \textbf{system} or \textbf{dynamical system} of the CKS.

The ODE system of the CRN in Example \ref{ex:2} can be written as
\begin{equation*}
	\dot{X} =
	\left[
		\begin{array}{c}
			\dot{X}_1 \\
			\dot{X}_2 \\
			\dot{X}_3 \\
			\dot{X}_4 \\
		\end{array}
	\right]
	=
	\left[
		\begin{array}{rrrrr}
		-2 & 0 & 0 & 0 & -2 \\
		0 & -1 & \textcolor{white}{-}1 & 1 & 0 \\
		1 & 0 & 0 & 1 & 0 \\
		0 & 0 & 0 & -3 & 3 \\
		\end{array}
	\right]
	\left[
		\begin{array}{l}
			k_1 X_1^{f_{11}} \\
			k_2 X_2^{f_{22}} X_3^{f_{23}} \\
			k_3 X_3^{f_{33}} \\
			k_4 X_4^{f_{44}} \\
			k_5 X_1^{f_{51}} \\
		\end{array}
	\right]
	= N K(X).
\end{equation*}

A zero of the SFRF is called an \textbf{equilibrium} or a \textbf{steady state} of the system. If $f$ is differentiable, an equilibrium $X^*$ is called \textbf{degenerate} if $\text{Ker}(J_{X^*} (f)) \cap S \neq \{0\}$ where $J_{X^*} (f)$ is the Jacobian of $f$ evaluated at $X^*$ and Ker is the kernel function; otherwise, the equilibrium is said to be \textbf{nondegenerate}.

A vector $X \in \mathbb{R}_{>0}^{\mathscr{S}}$ is called \textbf{complex balanced} if $K(X)$ is contained in $\text{Ker}(I_a)$ where $I_a$ is the incidence matrix. Furthermore, if $X$ is a positive equilibrium, then we call it a \textbf{complex balanced equilibrium}. A CKS is called \textbf{complex balanced} if it has a complex balanced equilibrium.

The reaction vectors of a CRN are \textbf{positively dependent} if, for each reaction $C_i \rightarrow C_j \in \mathscr{R}$, there exists a positive number $\alpha_{C_i \rightarrow C_j}$ such that
\begin{equation*}
	\sum_{C_i \rightarrow C_j} \alpha_{C_i \rightarrow C_j} (C_j - C_i) = 0.
\end{equation*}
A CRN with positively dependent reaction vectors is said to be \textbf{positive dependent}. Shinar and Feinberg \cite{SHFE2012} showed that a CKS can admit a positive equilibrium only if its reaction vectors are positively dependent. The \textbf{set of positive equilibria} of a CKS is given by
\begin{equation*}
	E_+ (\mathscr{N}, K) = \{X \in \mathbb{R}_{>0}^{\mathscr{S}} \mid f(X) = 0\}.
\end{equation*}
A CRN is said to \textbf{admit multiple (positive) equilibria} if there exist positive rate constants such that the ODE system admits more than one stoichiometrically compatible equilibria. Analogously, the \textbf{set of complex balanced equilibria} of a CKS $(\mathscr{N}, K)$ is given by
\begin{equation*}
	Z_+ (\mathscr{N}, K) = \{ X \in \mathbb{R}_{>0}^{\mathscr{S}} \mid I_a K(X) = 0 \} \subseteq E_+ (\mathscr{N}, K).
\end{equation*}

Let $F$ be an $r \times m$ matrix of real numbers. Define $X^F$ by $\displaystyle (X^F)_i = \prod_{j=1}^m X_j^{f_{ij}}$ for $i = 1, \ldots, r$. A \textbf{power law kinetics} (PLK) assigns to each $i$th reaction a function
\begin{equation*}
	K_i (X) = k_i (X^F)_i
\end{equation*}
with \textbf{rate constant} $k_i > 0$ and \textbf{kinetic order} $f_{ij} \in \mathbb{R}$. The vector $k \in \mathbb{R}^r$ is called the \textbf{rate vector} and the matrix $F$ is called the \textbf{kinetic order matrix}. We refer to a CRN with PLK as a \textbf{power law system}. The PLK becomes the well-known \textbf{mass action kinetics} (MAK) if the kinetic order matrix consists of stoichiometric coefficients of the reactants. We refer to a CRN with MAK as a \textbf{mass action system}.

In the ODE system of Example \ref{ex:2}, we assumed PLK so that the kinetic order matrix is
\begin{equation*}
	F =
	\left[
		\begin{array}{cccc}
			f_{11} & 0 & 0 & 0 \\
			0 & f_{22} & f_{23} & 0 \\
			0 & 0 & f_{33} & 0 \\
			0 & 0 & 0 & f_{44} \\
			f_{51} & 0 & 0 & 0 \\
		\end{array}
	\right]
\end{equation*}
where $f_{ij} \in \mathbb{R}$. If we assume MAK, the kinetic order matrix is
\begin{equation*}
	F =
	\left[
		\begin{array}{cccc}
			2 & 0 & 0 & 0 \\
			0 & 1 & 1 & 0 \\
			0 & 0 & 1 & 0 \\
			0 & 0 & 0 & 3 \\
			2 & 0 & 0 & 0 \\
		\end{array}
	\right].
\end{equation*}

The reactions $R_i, R_j \in \mathscr{R}$ are called \textbf{branching reactions} if they have the same reactant complex. One way to check if we have identified all branching reactions is through the formula
\begin{equation*}
    r - n_r = \sum_{C_i} (\vert R_{C_i} \vert - 1)
\end{equation*}
where $C_i$ is the reactant complex in $C_i \rightarrow C_j \in \mathscr{R}$, $R_{C_i}$ is the set of branching reactions of $C_i$, and $\vert R_{C_i} \vert$ is the cardinality of $R_{C_i}$. $r - n_r = 0$ if and only if all reactant complexes are nonbranching. A CRN is called \textbf{branching} if $r > n_r$.

We can classify a power law system based on the kinetic orders assigned to its branching reactions. A power law system has \textbf{reactant-determined kinetics} (of type PL-RDK) if, for any two branching reactions $R_i, R_j \in \mathscr{R}$, their corresponding rows of kinetic orders in $F$ are identical, i.e., $f_{ik} = f_{jk}$ for $k = 1, \ldots, m$. Otherwise, a power law system has \textbf{non-reactant-determined kinetics} (of type PL-NDK). From Proposition 12 in Arceo et al \cite{AJLM2017b}, a nonbranching power law system is PL-RDK.

Note that the stoichiometric subspace $S$ is just the set of all linear combinations of the reaction vectors, i.e., the set of all vectors in $\mathbb{R}^{\mathscr{S}}$ can be written in the form
\begin{equation*}
	\sum_{C_i \rightarrow C_j} \alpha_{C_i \rightarrow C_j} (C_j - C_i).
\end{equation*}

Let $L: \mathbb{R}^{\mathscr{R}} \rightarrow S$ be the linear map defined by
\begin{equation*}
	L(\alpha) = \sum_{C_i \rightarrow C_j} \alpha_{C_i \rightarrow C_j} (C_j - C_i).
\end{equation*}
$\text{Ker}(L)$ is the set of all vectors $\alpha \in \mathbb{R}^{\mathscr{R}}$ such that $L(\alpha) = 0$.

We say that a CRN is \textbf{concordant} if there do not exist an $\alpha \in \text{Ker}(L)$ and a nonzero $\sigma \in S$ having the following properties:
\begin{enumerate}[($i$)]
	\item For each $C_i \rightarrow C_j \in \mathscr{R}$ such that $\alpha_{C_i \rightarrow C_j} \neq 0$, $\text{supp}(C_i)$ contains a species $X$ for which $\text{sgn}(\sigma_X) = \text{sgn}(\alpha_{C_i \rightarrow C_j})$ where $\sigma_X$ denotes the term in $\sigma$ involving the species $X$ and $\text{sgn}(\cdot)$ is the signum function.
	\item For each $C_i \rightarrow C_j \in \mathscr{R}$ such that $\alpha_{C_i \rightarrow C_j} = 0$, either $\sigma_X = 0$ for all $X \in \text{supp}(C_i)$, or else $\text{supp}(C_i)$ contains species $X$ and $X'$ for which $\text{sgn}(\sigma_X) = -\text{sgn}(\sigma_{X'})$, but not zero.
\end{enumerate}
A network that is not concordant is \textbf{discordant}.

A CKS is \textbf{injective} if, for each pair of distinct stoichiometrically compatible vectors \linebreak $X^*, X^{**} \in \mathbb{R}_{\geq 0}^{\mathscr{S}}$, at least one of which is positive,
\begin{equation*}
	\sum_{C_i \rightarrow C_j} K_{C_i \rightarrow C_j} (X^{**}) (C_j - C_i) \neq \sum_{C_i \rightarrow C_j} K_{C_i \rightarrow C_j} (X^*) (C_j - C_i).
\end{equation*}
Clearly, an injective kinetic system cannot admit two distinct stoichiometrically compatible equilibria, at least one of which is positive.

A kinetics for a CRN is \textbf{weakly monotonic} if, for each pair of vectors $X^*, X^{**} \in \mathbb{R}_{\geq 0}^{\mathscr{S}}$, the following implications hold for each reaction $C_i \rightarrow C_j \in \mathscr{R}$ such that $\text{supp}(C_i) \subset \text{supp}(X^*)$ and $\text{supp}(C_i) \subset \text{supp}(X^{**})$:
\begin{enumerate}[($i$)]
	\item $K_{C_i \rightarrow C_j} (X^{**}) > K_{C_i \rightarrow C_j} (X^*)$ implies that there is a species $X_k \in \text{supp}(C_i)$ with $X_k^{**} > X_k^*$.
	\item $K_{C_i \rightarrow C_j} (X^{**}) = K_{C_i \rightarrow C_j} (X^*)$ implies that $X_k^{**} = X_k^*$ for all $X_k \in \text{supp}(C_i)$ or else there are species \linebreak $X_k, X_k' \in \text{supp}(C_i)$ with $X_k^{**} > X_k^*$ and $(X_k')^{**} < (X_k')^*$.
\end{enumerate}
We say that a CKS is \textbf{weakly monotonic} when its kinetics is weakly monotonic.

\bigskip

\begin{example}
	Every MAK is weakly monotonic.
\end{example}

\subsection{Decomposition Theory}

A \textbf{covering} of a CRN is a collection of subsets $\{\mathscr{R}_1, \ldots, \mathscr{R}_k\}$ whose union is $\mathscr{R}$. A covering is called a \textbf{decomposition} of $\mathscr{N}$ if the sets $\mathscr{R}_i$ form a partition of $\mathscr{R}$. $\mathscr{R}_i$ defines a subnetwork $\mathscr{N}_i$ of $\mathscr{N}$ where $\mathscr{N}_i = (\mathscr{S}_i, \mathscr{C}_i, \mathscr{R}_i)$ such that $\mathscr{C}_i $ consists of all complexes occurring in $\mathscr{R}_i$ and $\mathscr{S}_i$ has all the species occurring in $\mathscr{C}_i$. In this paper, we will denote a decomposition as a union of the subnetworks: $\mathscr{N} = \mathscr{N}_1 \cup \ldots \cup \mathscr{N}_k$. We refer to a ``decomposition'' with a single ``subnetwork'' $\mathscr{N} = \mathscr{N}_1$ as the \textbf{trivial decomposition}. Furthermore, when a network has been decomposed into subnetworks, we can refer to the said network as the \textbf{parent network}.

The most widely used decomposition of a reaction network is the set of linkage classes. Linkage classes have the special property that they not only partition the set of reactions but also the set of complexes.

A decomposition $\mathscr{N} = \mathscr{N}_1 \cup \ldots \cup \mathscr{N}_k$ is \textbf{independent} if the parent network's stoichiometric subspace $S$ is the direct sum of the subnetworks' stoichiometric subspaces $S_i$. Equivalently, the sum is direct when the rank of the parent network is equal to the sum of the ranks of the individual subnetworks, i.e.,
\begin{equation*}
	s = \sum_{i=1}^k s_i \text{ where } s_i = \text{dim}(S_i).
\end{equation*}

A network decomposition $\mathscr{N} = \mathscr{N}_1 \cup \ldots \cup \mathscr{N}_k$ is a \textbf{refinement} of $\mathscr{N} = \mathscr{N}_1' \cup \ldots \cup \mathscr{N}_{k'}'$ (and the latter a \textbf{coarsening} of the former) if it is induced by a refinement $\{\mathscr{R}_1, \ldots, \mathscr{R}_k\}$ of $\{\mathscr{R}_1', \ldots, \mathscr{R}_{k'}'\}$.

\bigskip

\begin{example}
	If $\mathscr{N} = \mathscr{N}_1 \cup \mathscr{N}_2 \cup \mathscr{N}_3$ and $\mathscr{N}' = \mathscr{N}_2 \cup \mathscr{N}_3$, then
	\begin{itemize}
		\item $\mathscr{N} = \mathscr{N}_1 \cup \mathscr{N}_2 \cup \mathscr{N}_3$ is a refinement of $\mathscr{N} = \mathscr{N}_1 \cup \mathscr{N}'$; and
		\item $\mathscr{N} = \mathscr{N}_1 \cup \mathscr{N}'$ is a coarsening of $\mathscr{N} = \mathscr{N}_1 \cup \mathscr{N}_2 \cup \mathscr{N}_3$.
	\end{itemize}
\end{example}

A decomposition is said to be \textbf{incidence independent} if its incidence matrix is the direct sum of the incidence matrices of the subnetworks. Consider the network decomposition \linebreak $\mathscr{N} = \mathscr{N}_1 \cup \ldots \cup \mathscr{N}_k$. If $\mathscr{N}$ ($\mathscr{N}_i$) has $n$ ($n_i$) complexes and $\ell$ ($\ell_i$) linkage classes, then incidence independence is equivalent to
\begin{equation*}
    n - \ell = \sum_{i=1}^k (n_i - \ell_i).
\end{equation*}

Despite its early founding by Feinberg in 1987, decomposition theory has received little attention in the Chemical Reaction Network Theory (CRNT) community. Recently, however, its usefulness expanded beyond results on equilibria existence and parametrization, e.g., in the relatively new field of concentration robustness. Interesting results for large and high deficiency systems have been derived for more general kinetic systems such as power law systems \cite{FOMF2021, FOME2021} and Hill-type systems \cite{HEME2021}.

\section{Variable Descriptions and Units}
\label{app:var}

The following are the variables used in the metabolic insulin signaling network together with their descriptions and unit of measurement:
\begin{align*}
	& X_2 = \text{Unbound surface insulin receptors (in molar)} \\
	& X_3 = \text{Unphosphorylated once-bound surface receptors (in molar)} \\
	& X_4 = \text{Phosphorylated twice-bound surface receptors (in molar)} \\
	& X_5 = \text{Phosphorylated once-bound surface receptors (in molar)} \\
	& X_6 = \text{Unbound unphosphorylated intracellular receptors (in molar)} \\
	& X_7 = \text{Phosphorylated twice-bound intracellular receptors (in molar)} \\
	& X_8 = \text{Phosphorylated once-bound intracellular receptors (in molar)} \\
	& X_9 = \text{Unphosphorylated IRS-1 (in molar)} \\
	& X_{10} = \text{Tyrosine-phosphorylated IRS-1 (in molar)} \\
	& X_{11} = \text{Unactivated PI 3-kinase (in molar)} \\
	& X_{12} = \text{Tyrosine-phosphorylated IRS-1/activated PI 3-kinase complex (in molar)} \\
	& X_{13} = \text{PI(3,4,5)P$_3$ out of the total lipid population (in \%)} \\
	& X_{14} = \text{PI(4,5)P$_2$ out of the total lipid population (in \%)} \\
	& X_{15} = \text{PI(3,4)P$_2$ out of the total lipid population (in \%)} \\
	& X_{16} = \text{Unactivated Akt (in \%)} \\
	& X_{17} = \text{Activated Akt (in \%)} \\
	& X_{18} = \text{Unactivated PKC-$\zeta$ (in \%)} \\
	& X_{19} = \text{Activated PKC-$\zeta$ (in \%)} \\
	& X_{20} = \text{Intracellular GLUT4 (in \%)} \\
	& X_{21} = \text{Cell surface GLUT4 (in \%)}.
\end{align*}

\section{Determining the CRN of the Metabolic Insulin Signaling Network}
\label{app:crn}

In Section 2.2, it is determined by the Hars-T\'{o}th criterion that there is a reaction network of the form $AX \xrightarrow{k} BX$ such that $f(X) = (B - A)^\top (k \circ X^A)$ where $A$ and $B$ have nonnegative integer entries. In this section, we show the detailed computation on how to get matrices $A$ and $B$.

To determine matrix $A$, we write
\begin{equation*}
	k \circ X^A =
	\left[
	\begin{array}{l}
		k_1 X_2 \\
		k_2 X_3 \\
		k_3 X_5 \\
		\textcolor{white}{k_nl} \vdots \\
		k_{34} \\
		k_{35} X_{20}
	\end{array}
	\right]
\end{equation*}
which gives us
\begin{center}
	\[ A =
		\begin{blockarray}{ccccccl}
				X_2 & X_3 & X_4 & \ldots & X_{20} & X_{21} \\
			\begin{block}{[cccccc]l}
				1 & 0 & 0 & \ldots & 0 & 0 \topstrut & R_1	\\
				0 & 1 & 0 & \ldots & 0 & 0 & R_2	\\
				0 & 0 & 0 & \ldots & 0 & 0 & R_3	\\
				\vdots & \vdots & \vdots & \cdots & \vdots & \vdots & \textcolor{white}{1}\vdots \\
				0 & 0 & 0 & \ldots & 0 & 0 & R_{34}	\\
				0 & 0 & 0 & \ldots & 1 & 0 \botstrut & R_{35}	\\
			\end{block}
		\end{blockarray}
	\]
\end{center}

To get $B$, observe first that we can write the ODE system as
\begin{equation*}
	\dot{X} = f(X) =
	\left[
	\begin{array}{rrrrrr}
		-1 & 1 & 0 & \ldots & 0 & 0	\\
		1 & -1 & 0 & \ldots & 0 & 0	\\
		0 & 0 & \textcolor{white}{-}1 & \ldots & \textcolor{white}{-}0 & 0	\\
		\vdots & \vdots & \vdots & \cdots & \vdots & \vdots \\
		0 & 0 & 0 & \ldots & 1 & -1 \\
		0 & 0 & 0 & \ldots & 0 & 0
	\end{array}
	\right]
	\left[
	\begin{array}{l}
		k_1 X_2 \\
		k_2 X_3 \\
		k_3 X_5 \\
		\textcolor{white}{k_nl} \vdots \\
		k_{34} \\
		k_{35} X_{20}
	\end{array}
	\right]
\end{equation*}
providing us with
\begin{equation*}
	(B - A)^T =
	\left[
	\begin{array}{rrrrrr}
		-1 & 1 & 0 & \ldots & 0 & 0	\\
		1 & -1 & 0 & \ldots & 0 & 0	\\
		0 & 0 & \textcolor{white}{-}1 & \ldots & \textcolor{white}{-}0 & 0	\\
		\vdots & \vdots & \vdots & \cdots & \vdots & \vdots \\
		0 & 0 & 0 & \ldots & 1 & -1 \\
		0 & 0 & 0 & \ldots & 0 & 0
	\end{array}
	\right]
\end{equation*}
allowing us to easily get $B$:
\begin{center}
	\[ B =
		\begin{blockarray}{ccccccl}
				X_2 & X_3 & X_4 & \ldots & X_{20} & X_{21} \\
			\begin{block}{[cccccc]l}
				0 & 1 & 0 & \ldots & 0 & 0 \topstrut & R_1	\\
				1 & 0 & 0 & \ldots & 0 & 0 & R_2	\\
				0 & 0 & 1 & \ldots & 0 & 0 & R_3	\\
				\vdots & \vdots & \vdots & \cdots & \vdots & \vdots & \textcolor{white}{1}\vdots \\
				0 & 0 & 0 & \ldots & 1 & 0 & R_{34}	\\
				0 & 0 & 0 & \ldots & 0 & 0 \botstrut & R_{35}	\\
			\end{block}
		\end{blockarray}.
	\]
\end{center}

Matrices $A$ and $B$ are used to determine the CRN corresponding to the ODE system \linebreak $\dot{X} = f(X)$ in Section 2.1.

\section{Generalization of Lemma 3}
\label{app:lem}

In this section, we present two propositions related to Lemma 3 in Section 3.2. The first can be viewed as a corollary of statement $(i)$ of Lemma 3. The rest of the section leads to a generalization of this corollary to the case of a nonempty intersection of the species sets.

\begin{proposition}
	Let $\mathscr{N} = (\mathscr{S}, \mathscr{C}, \mathscr{R})$ be a CRN with kinetics $K$ and subsets $\mathscr{S}_1$ and $\mathscr{S}_2$ such that $\mathscr{S}_1 \cup \mathscr{S}_2 = \mathscr{S}$. If $\mathscr{S}_1 \cap \mathscr{S}_2 = \varnothing$, then for any kinetics, if each subsystem has an equilibrium in every stoichiometric class, then so does $(\mathscr{N}, K)$. If the equilibria are unique positive equilibria, the same holds for $(\mathscr{N}, K)$.
\end{proposition}

\smallskip

\begin{proof}
	By Feinberg's Decomposition Theorem, any equilibrium of the whole system is also an equilibrium in each subnetwork. Hence, such an equilibrium in a stoichiometric compatibility class corresponds to its projection pair in the subnetwork stoichiometric compatibility classes.
\end{proof}

The next proposition describes the relationship between the stoichiometric classes of a CRN and those of a binary decomposition. This lays the foundation for the generalization of statement $(i)$ of Lemma 3 to the nonempty intersection case.

\begin{proposition}
	Let $\mathscr{N} = \mathscr{N}_1 \cup \mathscr{N}_2$ be a decomposition and $S$, $S_1$, and $S_2$ the corresponding stoichiometric subspaces. Let $d: \mathbb{R}^{\mathscr{S}} \rightarrow \mathbb{R}^{\mathscr{S}} / S_1 \times \mathbb{R}^{\mathscr{S}} / S_2$ be the linear map $d(x) = (x + S_1, x + S_2)$ and $\Delta: \mathbb{R}^{\mathscr{S}} / S_1 \times \mathbb{R}^{\mathscr{S}} / S_2 \rightarrow \mathbb{R}^{\mathscr{S}} / S $ the linear map $\Delta (x_1 + S_1, x_2 + S_2) = (x_1 - x_2) + S$. Then
	\begin{enumerate}[$(i)$]
		\item $\Delta$ is surjective.
		\item $Ker(\Delta) = \{ (x_1 + S_1, x_2 + S_2) \mid (x_1 + S_1) \cap (x_2 + S_2) \neq \varnothing \}$
		\item $Im(d) = Ker(\Delta)$
		\setcounter{nameOfYourChoice}{\value{enumi}}
	\end{enumerate}
	If, in addition, the decomposition is independent, then
	\begin{enumerate}[$(i)$]
		\setcounter{enumi}{\value{nameOfYourChoice}}
		\item $d: \mathbb{R}^{\mathscr{S}} \rightarrow Ker(\Delta)$ is an isomorphism.
		\item $\vert (x_1 + S_1) \cap (x_2 + S_2) \vert \leq 1$
	\end{enumerate}
\end{proposition}

\smallskip

\begin{proof}
	It is easily verified that the maps are well-defined.

	$(i)$ For any $z + S \in \mathbb{R}^{\mathscr{S}} / S$, write $z = z' + z''$ where $z' \in S$ and $z'' \in S^\perp$. Let \linebreak $z' = z'_1 + z'_2$ where $z'_i = p_i (z')$ (the projection map $p_i$ is defined in statement $(i)$ of Lemma 3). Then $\Delta (\frac{z''}{2} + z'_2 + S_1, -\frac{z''}{2} - z'_1 + S_2) = z'' + z'_2 + z'_1 + S$.
	
	$(ii)$  By definition, we have $\text{Ker}(\Delta) = \{ (x_1 + S_1, x_2 + S_2) \mid x_1 - x_2 \in S \}$. This implies that $x_1 - x_2 = s_1 + s_2$ where $s_i \in S_i$. Therefore, $x = x_1 - s_1 = x_2 + s_2 \in (x_1 + S_1) \cap (x_2 + S_2)$. Conversely, $x = x_1 + s_1 = x_2 + s_2$ implies that $x_1 - x_2 = (-s_1) + s_2 \in S$.
	
	$(iii)$  Clearly, $\text{Im}(d) \subseteq \text{Ker}(\Delta)$. On the other hand, for any $(x_1 + S_1, x_2 + S_2) \in \text{Ker}(\Delta)$ and any \linebreak $x \in (x_1 + S_1) \cap (x_2 + S_2)$, $x + S_1 = x_1 + S_1$ and $x + S_2 = x_2 + S_2$. Hence, $(x_1 + S_1, x_2 + S_2) = d(x)$.
	
	$(iv)$  $d(x) = (S_1, S_2) \Leftrightarrow x \in S_1$ and $x \in S_2 \Rightarrow x = 0$ since the sum is direct.
	
	$(v)$  Suppose the intersection is nonempty with common elements $x$ and $x'$. Then $x - x' \in S_1$ and $x - x' \in S_2$ implying that $x - x' = 0$ since the sum is direct.
\end{proof}

Hence, for any binary independent decomposition, we have
\begin{equation*}
	\mathbb{R}^{\mathscr{S}} / S_1 \times \mathbb{R}^{\mathscr{S}} / S_2 = \text{Ker}(\Delta) \times \mathbb{R}^{\mathscr{S}} / S.
\end{equation*}
Since $\mathbb{R}^{\mathscr{S}} / S_i = \mathbb{R}^{\mathscr{S}} / S_i \times \mathbb{R}^{m - m_i}$ where $m$ and $m_i$ are the numbers of species in $\mathscr{S}$ and $\mathscr{S}_i$, respectively, we have
\begin{align*}
	\mathbb{R}^{\mathscr{S}} / S_1 \times \mathbb{R}^{\mathscr{S}} / S_2
	& = \mathbb{R}^{\mathscr{S}} / S_1 \times \mathbb{R}^{\mathscr{S}} / S_2 \times \mathbb{R}^{m - m_1} \times \mathbb{R}^{m - m_2} \\
	& = \mathbb{R}^{\mathscr{S}} / S_1 \times \mathbb{R}^{\mathscr{S}} / S_2 \times \mathbb{R}^{m - c}
\end{align*}
where $2m - m_1 - m_2 = m - c$ ($c$ is the number of common species). Since $d$ is an isomorphism between $\mathbb{R}^m$ and $\text{Ker}(\Delta)$, we have
\begin{equation*}
	\mathbb{R}^{\mathscr{S}} / S_1 \times \mathbb{R}^{\mathscr{S}} / S_2 = \mathbb{R}^c \times \mathbb{R}^{\mathscr{S}} / S.
\end{equation*}
This reduces to statement $(i)$ of Lemma 3 when $c = 0$.

\end{appendices}

\end{document}